\newtheorem{definition}{Definition}[section]
\newtheorem{theorem}{Theorem}[section]
\newtheorem{lemma}{Lemma}[section]
\newtheorem{proposition}{Proposition}[section]
\newtheorem{claim}{Claim}[section]
\begin{document}

%\begin{frontmatter}
\title{On the complexity of co-secure dominating set problem}
\author{B S Panda \and Soumyashree Rana}
\address{Department of Mathematics, Indian Institute of Technology Delhi, India} 
\email{bspanda@maths.iitd.ac.in, maz218122@iitd.ac.in}
\author{Sounaka Mishra}
\address{Department of Mathematics, Indian Institute of Technology Madras, India} 
\email{sounak@iitm.ac.in}

%\ead{bspanda@maths.iitd.ac.in}
%\cortext[cor1]{Corresponding author}
%\address{Department of Mathematics, Indian Institute of Technology Delhi, India}
%
%\author[a2]{Soumyashree Rana}
%\ead{maz218122@iitd.ac.in}
%\address[a2]{Department of Mathematics, Indian Institute of Technology Delhi, India}

%\author[a3]{Sounaka Mishra}
%\ead{sounak@iitm.ac.in}
%\address[a3]{Department of Mathematics, Indian Institute of Technology Madras, India} 

\maketitle

\begin{abstract}
A set $D \subseteq V$ of a graph $G=(V, E)$ is a dominating set of $G$ if every vertex $v\in V\setminus D$ is adjacent to at least one vertex in $D.$ A set $S \subseteq V$ is a co-secure dominating set (\textsc{CSDS}) of a graph $G$ if $S$ is a dominating set of $G$ and for each vertex $u \in S$ there exists a vertex $v \in V\setminus S$ such that $uv \in E$ and $(S\setminus \{u\}) \cup \{v\}$ is a dominating set of $G$ . The minimum cardinality of a co-secure dominating set of $G$ is the co-secure domination number and it is denoted by $\gamma_{cs}(G)$. Given a graph $G=(V, E)$, the minimum co-secure dominating set problem (\textsc{Min Co-secure Dom})  is to find a co-secure dominating set of minimum cardinality. 
In this paper, we strengthen the inapproximability result of \textsc{Min Co-secure Dom} for general graphs by showing that this problem can not be approximated within a factor of $(1- \epsilon)\ln |V|$ for perfect elimination bipartite graphs and star convex bipartite graphs unless \textsc{P=NP}. On the positive side, we show that  \textsc{Min Co-secure Dom} can be approximated within a factor of $O(\ln |V|)$ for any graph $G$ with $\delta(G)\geq 2$. 
For $3$-regular and $4$-regular graphs, we show that \textsc{Min Co-secure Dom} is approximable within a factor of $\dfrac{8}{3}$ and $\dfrac{10}{3}$, respectively. Furthermore, we  prove that \textsc{Min Co-secure Dom} is \textsc{APX}-complete for $3$-regular graphs. 
\end{abstract}
\keywords{
Domination, Co-secure domination, Approximation algorithm, Inapproximability, \textsc{APX-}complete}
%\end{frontmatter}

\section{Introduction}
Let $G=(V,E)$ be a finite, simple, and undirected graph with vertex set $V$ and edge set $E.$ The graph $G$ considered in this paper is without isolated vertices.  A set $D \subseteq V$ is said to be a \textit{dominating set} of $G$ if every vertex $v$ in $V\setminus D$ has an adjacent vertex $u$ in $D.$ The minimum cardinality among all dominating sets of $G$ is the \textit{domination number} of $G,$ and it is denoted by $\gamma(G).$ Given a graph $G$, in minimum dominating set problem (\textsc{Min Dom}), it is required to find a dominating set $D$ of minimum cardinality. \textsc{Min Dom} and its variations are studied extensively because of their real-life applications and theoretical applications. Detailed survey and results are available in \cite{haynes2017domination,haynes2013fundamentals,haynes2020topics}.

 A dominating set $S \subseteq V$ of $G=(V, E)$  is
called a secure dominating set of $G$, if $S$ is a dominating set of $G$ and for every $ u \in V \setminus S$ there exists a vertex
$v \in S,$ adjacent to $u$ such that $(S\setminus\{v\})\cup\{u\}$ is a dominating set of $G$. This important variation of domination was  introduced by Cockayne et al. \cite{cockayne2005protection}. The problem
of finding a minimum cardinality secure dominating set of a graph is known
as the Minimum Secure Domination Problem.  This problem and
its many variants have been extensively studied by several researchers \cite{araki2019secure,cockayne2005protection,
klostermeyer2008secure,kumar2020algorithmic,merouane2015secure,poureidi2021computing,wang2018complexity}. 

A set $S \subseteq V$ is a co-secure dominating set (\textsc{CSDS}) of a graph $G$ if $S$ is a dominating set and for each vertex $u \in S$ there exists a vertex $v \in V\setminus S$ such that $uv \in E$ and $(S\setminus \{u\}) \cup \{v\}$ is a dominating set of $G$. The minimum cardinality of a co-secure dominating set of $G$ is the co-secure domination number and it is denoted by $\gamma_{cs}(G).$ Given a graph $G=(V, E)$, in minimum co-secure dominating set  problem (\textsc{Min Co-secure Dom}), it is required to find a co-secure dominating set $S$ of minimum cardinality.
\textsc{Min Co-secure Dom} was introduced by Arumugam et al. \cite{arumugam2014co}, where they showed that the decision version of \textsc{Min Co-secure Dom} is \textsc{NP}-complete for  bipartite, chordal, and planar graphs.  They also determined the co-secure domination number for
some families of the standard graph classes such as paths, cycles, wheels, and complete $t$-partite graphs.   Some  bounds on the co-secure domination number for certain families of graphs were given by Joseph et al. \cite{joseph2018bounds}.  Manjusha et al. \cite{manjusha2020} characterized the Mycielski graphs with the co-secure domination number $2$ or $3$ and gave a sharp upper bound for $\gamma_{cs}(\mu(G))$,  where $\mu(G)$ is the Mycielski of a graph $G$. Later Zou et al.\cite{zou2022co}  proved that the co-secure domination number of proper interval graphs can be computed in linear time.  In \cite{arti2022co}, it is proved that \textsc{Min Co-secure Dom} is \textsc{NP}-hard to approximate within a factor of $(1 -\varepsilon)\ln |V|$  for any $\varepsilon >0$, and it is \textsc{APX}-complete for graphs with maximum degree 4.

In this paper, we extend the algorithmic study of \textsc{Min Co-secure Dom} by using certain properties of minimum double dominating set under some assumptions. The main contributions of the paper are summarised below. 

\begin{itemize}
\item   We prove that \textsc{Min Co-secure Dom}  can not be approximated  within a factor of $(1- \varepsilon)\ln |V|$ for perfect elimination bipartite graphs and star convex bipartite graphs unless \textsc{P}=\textsc{NP}. This improves the result due to Kusum and Pandey \cite{arti2022co}.

\item   We propose an approximation algorithm for  \textsc{Min Co-secure Dom}  for general graphs $G$ with $\delta(G)\geq 2,$  within a factor of $O(\ln |V|)$. In terms of maximum degree $\Delta$, it can be approximated within a factor of $2+2(\ln \Delta + 2)$.

\item For $3$-regular and $4$-regular graphs, we show that \textsc{Min Co-secure Dom} is  approximable within a factor of $\dfrac{8}{3}$ and $\dfrac{10}{3}$, respectively. 

\item We also prove that \textsc{Min Co-secure Dom} is \textsc{APX}-complete for $3$-regular graphs. 
\end{itemize}

\section{Preliminaries}
In this section, we give some pertinent definitions and state some preliminary results. Let $G=(V, E)$ be a finite, simple, and undirected graph with no isolated vertex. The open neighborhood of a vertex $v$ in $G$ is $N(v)=\{u \in V \mid uv \in E\}$ and the closed neighborhood is $N[v]=\{v\} \cup N(v).$ The degree of a vertex $v$ is $|N(v)|$ and is denoted by $d(v).$  If $d(v)=1$ then $v$ is called a \textit{pendant vertex} in $G$. 
The minimum degree and maximum degree of $G$ are denoted by $\delta$ and $\Delta,$ respectively. For $D\subseteq V,~G[D]$ denotes the subgraph induced by $D.$ We use the notation $[k]$ for $\{1, 2, \cdots, k\}.$ Given $S\subseteq V$ and $v\in S,$ a vertex $u\in V\setminus S$ is an $S$-external private neighbor ($S$-epn) of $v$ if $N(u)\cap S=\{v\}.$ The set of all $S$-epn of $v$ is denoted by $EPN(v, S).$ Some other notations and terminology which are not introduced here can be found in \cite{west2001introduction}.

A \textit{bipartite graph} is a graph $G=(V, E)$ whose vertices can be partitioned into two disjoint sets $X$ and $Y$ such that every edge has one endpoint in $X$ and other in $Y.$ We denote a bipartite graph with vertex bi-partition $X$ and $Y$ of $V$ as $G=(X, Y, E).$ The edge $uv\in E$ is a bi-simplicial edge if $N(u)\cup N(v)$ induces a complete bipartite subgraph in $G$. Let $\sigma = [e_1, e_2, \cdots, e_k]$ be an ordering of pairwise non-adjacent edges of $G.$ With respect to this ordering $\sigma$, we define $P_i,~ i \in [k]$ as the set of end vertices of the edges $\{e_1, e_2, \ldots, e_i\}$, and let $P_0= \emptyset$. The ordering $\sigma$ is said to be a \textit{perfect elimination ordering} for $G$ if $G[(X\cup Y)\setminus P_k]$ has no edge and each edge $e_i$ is bi-simplicial in $G[(X\cup Y)\setminus P_{i-1}].$ A graph $G=(V, E)$ is said to be a \textit{perfect elimination bipartite graph} if and only if it admits a perfect elimination ordering \cite{golumbic1978perfect}. A bipartite graph $G=(X, Y, E)$ is called a \textit{star convex bipartite graph} if a star graph $H=(X, E_X)$ can be  defined such that for every vertex $y\in Y,$ $N(y)$ induces a connected subgraph in $H$. 

\section{Approximation Algorithms}
In this section, we propose an approximation algorithm for \textsc{Min Co-secure Dom} whose approximation ratio is a logarithmic factor of the number of vertices of the input graph. To obtain the approximation ratio of \textsc{Min Co-secure Dom}, we require the approximation ratio of the minimum double dominating set problem  (\textsc{Min Double Dom}). Given a graph $G=(V, E)$, in \textsc{Min Double Dom}, the aim is to find a vertex set $D\subseteq V$ of minimum cardinality such that $|N(v) \cap D| \geq 2$, for all $v \in V \setminus D$. We shall denote $\gamma_2(G)$ as the cardinality of a minimum double dominating set in $G$. We will use the following proposition and a few lemmas to analyze our approximation algorithms' performance.

\begin{proposition} \label{CSDprop} (\cite{arumugam2014co})
Let $S$ be a \textsc{CSDS} of $G$. A vertex $v\in V\setminus S$ replaces $u\in S$ if and only if $v\in N(u)$ and $EPN(u, S)\subseteq N[v].$
\end{proposition}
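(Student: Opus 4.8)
The plan is to unwind the definition of ``$v$ replaces $u$'' --- namely that $uv\in E$ and $S':=(S\setminus\{u\})\cup\{v\}$ is a dominating set of $G$ --- and to track exactly which vertices the swap can leave undominated. First I would pin down the set of vertices whose domination status can change. Since $S\setminus\{u\}\subseteq S'$, every vertex of $S\setminus\{u\}$ remains in $S'$ and needs no check, and $v\in S'$ is likewise fine; so the only vertices lying in $V\setminus S'$ are $u$ itself together with $(V\setminus S)\setminus\{v\}$. Consequently $S'$ dominates $G$ if and only if (i) $u$ has a neighbour in $S'$, and (ii) every $w\in (V\setminus S)\setminus\{v\}$ has a neighbour in $S'$.

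Next I would simplify (ii) using that $S$ is already a dominating set. For $w\in (V\setminus S)\setminus\{v\}$, $w$ has some neighbour in $S$; if that neighbour lies in $S\setminus\{u\}$ then $w$ is dominated by $S'$, so the only obstruction is when the unique neighbour of $w$ in $S$ is $u$, i.e. $w\in EPN(u,S)$. In that case $w$ must instead be adjacent to $v$. Hence (ii) is equivalent to $EPN(u,S)\setminus\{v\}\subseteq N(v)$, which --- allowing for the harmless possibility $v\in EPN(u,S)$, where $v$ has been removed from $V\setminus S'$ anyway --- is exactly $EPN(u,S)\subseteq N[v]$.

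For (i): the relation ``$v$ replaces $u$'' already carries the hypothesis $uv\in E$, so $v\in N(u)$ and $v\in S'$, whence (i) holds automatically; conversely $v\in N(u)$ is part of the claimed characterisation. Combining the two directions then finishes the proof: if $v$ replaces $u$, then $v\in N(u)$ by definition and (ii) forces $EPN(u,S)\subseteq N[v]$; and if $v\in N(u)$ and $EPN(u,S)\subseteq N[v]$, then $uv\in E$, (i) holds because $v\in S'\cap N(u)$, and (ii) holds by the displayed equivalence, so $S'$ is dominating and $v$ replaces $u$.

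I do not expect a genuine obstacle here. The only point requiring a little care is the bookkeeping in the equivalence of (ii) with $EPN(u,S)\subseteq N[v]$ --- in particular handling the edge case $v\in EPN(u,S)$ (which is fine, since $v$ leaves the set of vertices that must be dominated once it enters $S'$) --- and being explicit that the ``$uv\in E$'' half of the statement comes straight from the definition of ``replaces'' rather than from the domination requirement.
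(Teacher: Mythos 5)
Your proof is correct: unwinding the definition of ``replaces'' (adjacency $uv\in E$ plus $(S\setminus\{u\})\cup\{v\}$ dominating) and observing that the only vertices whose domination can fail after the swap are $u$ and the vertices of $EPN(u,S)\setminus\{v\}$ gives exactly the stated characterization, and you handle the edge case $v\in EPN(u,S)$ properly. Note that the paper itself offers no proof of this proposition --- it is quoted from Arumugam, Ebadi and Manrique \cite{arumugam2014co} --- so there is no in-paper argument to compare against; your definition-unwinding argument is the standard one and fills that gap adequately.
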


\begin{lemma}\label{minimal_D_2}
If $G$ is a connected graph with at least 3 vertices then every minimal double dominating set $D_2$ of $G$ is a proper subset of $V$. Moreover, if $\delta(G) \geq 2$ then every minimal double dominating set $D_2$ is a co-secure dominating set of $G.$
\end{lemma}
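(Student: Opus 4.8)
The plan is to prove the two assertions separately. For the first, suppose $D_2$ is a minimal double dominating set of a connected graph $G$ on at least three vertices but $D_2 = V$. Minimality means that for every $u \in D_2$ the set $D_2 \setminus \{u\}$ fails to be a double dominating set, i.e. there is a vertex $w$ with $|N(w) \cap (D_2 \setminus \{u\})| \le 1$ if $w \ne u$, or $u$ itself is not doubly dominated by $D_2 \setminus \{u\}$, meaning $d(u) \le 1$. So for each $u \in V$, removing $u$ creates a ``deficient'' vertex; I would pick any vertex $u$ of minimum degree and examine the witness $w$ for its removal. Since $D_2 = V$, double domination of $V \setminus D_2 = \emptyset$ is vacuous, so in fact $V$ is always a double dominating set and its minimality forces every vertex to be critical; I would argue that picking a vertex $u$ such that $G - u$ is still connected (a non-cut vertex, which exists in any connected graph on $\ge 2$ vertices) and noting that every vertex of $G$ has degree $\ge 1$ since $G$ is connected with $\ge 2$ vertices, one checks that $D_2 \setminus \{u\} = V \setminus \{u\}$ still double dominates every vertex of degree $\ge 2$ in $G$; so the only obstruction is a pendant vertex adjacent to $u$, or $u$ itself being pendant. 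A short case analysis using connectivity and $|V| \ge 3$ (so not every vertex can be pendant, and a pendant vertex's neighbor can be removed instead) yields a contradiction, giving $D_2 \subsetneq V$.

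For the second assertion, assume $\delta(G) \ge 2$ and let $D_2$ be a minimal double dominating set; by the first part $V \setminus D_2 \ne \emptyset$. A double dominating set is in particular a dominating set, so it remains to verify the co-secure replacement condition: for every $u \in D_2$ there is $v \in N(u) \setminus D_2$ such that $(D_2 \setminus \{u\}) \cup \{v\}$ dominates $G$. By Proposition~\ref{CSDprop}, it suffices to find $v \in N(u) \setminus D_2$ with $EPN(u, D_2) \subseteq N[v]$. The key observation is that because $D_2$ is a \emph{double} dominating set, no vertex of $V \setminus D_2$ has $u$ as its unique neighbor in $D_2$ — every vertex outside $D_2$ has at least two neighbors in $D_2$ — so $EPN(u, D_2) = \emptyset$. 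Hence the condition $EPN(u, D_2) \subseteq N[v]$ holds trivially for \emph{any} $v$, and we only need to exhibit some neighbor of $u$ lying outside $D_2$.

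So the crux reduces to: every $u \in D_2$ has a neighbor in $V \setminus D_2$. This is where minimality and $\delta(G) \ge 2$ enter, and I expect it to be the main (though still short) obstacle. Suppose $u \in D_2$ has $N(u) \subseteq D_2$. I would show $D_2 \setminus \{u\}$ is still a double dominating set, contradicting minimality: for $w \in V \setminus D_2$, $w$ is not adjacent to $u$ (all of $u$'s neighbors are in $D_2$), so $|N(w) \cap (D_2 \setminus \{u\})| = |N(w) \cap D_2| \ge 2$; and $u$ itself, now outside $D_2 \setminus \{u\}$, has all $d(u) \ge \delta(G) \ge 2$ of its neighbors in $D_2 \setminus \{u\}$, so it is doubly dominated too. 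Thus $D_2 \setminus \{u\}$ double dominates $G$, contradicting minimality of $D_2$. Therefore every $u \in D_2$ has a neighbor $v \in V \setminus D_2$, and by the paragraph above $(D_2 \setminus \{u\}) \cup \{v\}$ is dominating, so $D_2$ is a co-secure dominating set of $G$.
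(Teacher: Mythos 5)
Your second assertion is handled correctly and essentially as in the paper: minimality together with $\delta(G)\ge 2$ gives every $u\in D_2$ a neighbour $v\notin D_2$, and since every vertex outside $D_2$ has at least two neighbours in $D_2$, the swap $(D_2\setminus\{u\})\cup\{v\}$ dominates. Your phrasing via $EPN(u,D_2)=\emptyset$ and Proposition \ref{CSDprop} is the same observation the paper makes by a direct contradiction; note only that the proposition is stated for a set already known to be a \textsc{CSDS}, so strictly you should either invoke the underlying fact for dominating sets or argue directly as the paper does -- a cosmetic point.

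The genuine problem is the first assertion, which you leave as an unexecuted ``short case analysis'' resting on claims that do not match the paper's definition of double domination. In this paper a set $D$ is double dominating when every vertex of $V\setminus D$ has at least two neighbours in $D$; vertices inside $D$ carry no requirement. Hence when $D_2=V$ and you delete $u$, the only vertex that needs checking is $u$ itself, so $V\setminus\{u\}$ is a double dominating set if and only if $d(u)\ge 2$; connectivity and $|V|\ge 3$ supply such a vertex $u$, contradicting minimality in one line -- no non-cut vertices are needed. Your claimed obstruction ``a pendant vertex adjacent to $u$'' does not exist under this definition (such a vertex lies inside $V\setminus\{u\}$ and needs nothing); it only arises under the closed-neighbourhood variant in which every vertex of $V$ must be dominated twice, and under that variant the first assertion is actually false (in a star $K_{1,n-1}$, e.g.\ $P_3$, the whole vertex set is a minimal double dominating set), so the case analysis you defer to could not be completed as sketched. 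Rewritten with the paper's definition, your own remark that a pendant vertex's neighbour has degree at least $2$ already finishes the argument.
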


\begin{proof}
Suppose there exists a minimal double dominating set $D_2$ of $G$ such that $|D_2|=|V|.$ Since $|V|\geq 3$ and $G$ is connected, there exists a vertex $v\in V$ with $d(v)\geq 2.$ Now, $D_2\setminus \{v\}$ is a double dominating set of $G$ contradicting the minimality of $D_2.$  

Let $D_2$ be a minimal double dominating set of $G$. From the minimality of $D_2$, it follows that every vertex $u \in D_2$ has at least one neighbor in $V\setminus D_2.$ Suppose there exists a vertex $p \in D_2$ such that $N(p) \subseteq D_2$. Then $D_2 \setminus \{p\}$ is also a double dominating set (as $d(p) \geq 2$.) This contradicts the minimality of $D_2$. 

Let $u$ be any vertex in $D_2$ and $v$ be its neighbor not in $D_2$. Next, we show that $S=(D_2\setminus \{u\})\cup \{v\}$ is a dominating set of $G.$ 
Suppose not, then there exists a vertex $w\in V\setminus S$ such that no vertex of $S$ dominates $w.$ $D_2$ is a dominating set of $G$ implies that $N(w)\cap D_2=\{u\}$. This contradicts the fact that $D_2$ is a double dominating set of $G.$ 

From the above arguments, it follows that $D_2$ is a co-secure dominating set of $G$.
\end{proof}

In the next lemma, we prove bounds on $\gamma_2(G)$ which we will use in designing approximation algorithms for \textsc{Min Co-secure Dom}.

\begin{lemma} \label{CSD_bound_with_D2}
For every graph $G$ with $\delta(G) \geq 2$, $\gamma_{cs}(G)\leq \gamma_2(G)\leq 2\gamma_{cs}(G).$ Moreover, these bounds are tight.
\end{lemma}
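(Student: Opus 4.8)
The plan is to establish the two inequalities $\gamma_{cs}(G)\le\gamma_2(G)$ and $\gamma_2(G)\le 2\gamma_{cs}(G)$ separately, and then exhibit extremal families for tightness.

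For the first inequality, I would invoke Lemma~\ref{minimal_D_2}: take a \emph{minimum} double dominating set $D_2$ of $G$, which is in particular minimal, so by that lemma it is a co-secure dominating set. Hence $\gamma_{cs}(G)\le|D_2|=\gamma_2(G)$. (One should note $G$ has $\delta(G)\ge2$, so no isolated vertices, and the hypothesis $|V|\ge 3$ of Lemma~\ref{minimal_D_2} holds unless $G$ is very small; for $\delta(G)\ge 2$ the only graph on $\le 2$ vertices is excluded anyway, and a disconnected $G$ can be handled componentwise, so this is a minor point to dispatch at the start.)

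For the second inequality, I would start from a minimum co-secure dominating set $S$ with $|S|=\gamma_{cs}(G)$ and build a double dominating set of size at most $2|S|$. The natural idea: for each $u\in S$, the co-secure property gives a witness $v_u\in V\setminus S$ with $uv_u\in E$ and $(S\setminus\{u\})\cup\{v_u\}$ dominating; set $D=S\cup\{v_u: u\in S\}$, so $|D|\le 2|S|$. It remains to check $D$ double dominates: every vertex $w\in V\setminus D$ is dominated by $S$ (since $S$ dominates), giving one neighbor in $D$; I must produce a second. If $w$ has two neighbors in $S$ we are done, so suppose $N(w)\cap S=\{u\}$. Then $w\notin(S\setminus\{u\})\cup\{v_u\}$ — here one must check $w\ne v_u$, which holds because $w\notin D\supseteq\{v_u\}$ — and since $(S\setminus\{u\})\cup\{v_u\}$ dominates $w$, $w$ has a neighbor in $(S\setminus\{u\})\cup\{v_u\}$; as $N(w)\cap(S\setminus\{u\})=\emptyset$, this neighbor must be $v_u\in D$, the desired second neighbor. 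Thus $D$ is a double dominating set and $\gamma_2(G)\le|D|\le 2\gamma_{cs}(G)$. The one subtlety to watch is whether the chosen witnesses $v_u$ are distinct from one another — they need not be, but that only helps, since it makes $|D|$ smaller, so the bound $|D|\le 2|S|$ still holds.

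For tightness of the lower bound $\gamma_{cs}(G)\le\gamma_2(G)$, I would look for a graph where a minimum co-secure dominating set is already a double dominating set: cycles $C_n$ are a good candidate, where known formulas for $\gamma_{cs}(C_n)$ and $\gamma_2(C_n)$ coincide for suitable $n$ (both are roughly $\lceil n/2\rceil$), forcing equality. For tightness of the upper bound $\gamma_2(G)\le 2\gamma_{cs}(G)$, I would seek a graph with a small co-secure dominating set whose every vertex genuinely needs a distinct new partner to achieve double domination — a corona-type or "book"/friendship-type construction, or a disjoint union of suitably gadgeted blocks, where $\gamma_{cs}$ picks one vertex per block but $\gamma_2$ is forced to take two. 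The main obstacle is not the inequalities, which are short, but pinning down clean extremal examples and verifying their $\gamma_{cs}$ and $\gamma_2$ values exactly; I would pick the simplest families (cycles for the lower bound, a small explicit gadget possibly taken in disjoint copies for the upper bound) so the verification stays routine.
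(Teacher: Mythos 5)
Your proof of $\gamma_{cs}(G)\le\gamma_2(G)$ is the same as the paper's: a minimum (hence minimal) double dominating set is a co-secure dominating set by Lemma~\ref{minimal_D_2}, and your remark about dispatching the connectivity/$|V|\ge 3$ hypothesis componentwise is a fair way to handle that minor point. For $\gamma_2(G)\le 2\gamma_{cs}(G)$ you take a genuinely different, and in fact simpler, route: you adjoin to a minimum \textsc{CSDS} $S$ one replacement witness $v_u$ per $u\in S$ and verify directly that $D=S\cup\{v_u : u\in S\}$ is a double dominating set. Your case analysis is sound: if $w\notin D$ and $N(w)\cap S=\{u\}$, then $w\notin (S\setminus\{u\})\cup\{v_u\}$ (you correctly check $w\ne v_u$), so domination by that set forces $v_u\in N(w)$, giving the two neighbors $u,v_u\in D$; and repeated witnesses only shrink $D$. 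Notably your argument does not even use $\delta(G)\ge 2$ for this direction. The paper instead partitions $S$ into the vertices with and without external private neighbors ($D'$ and $D''$), uses Proposition~\ref{CSDprop} to select one distinguished private neighbor per vertex of $D'$ (a set $A'$ with $|A'|=|D'|$), and adds a further set $B'$ dominating $D''$ with $|B'|\le |D''|$, reaching the same bound $|D|+|D'|+|D''|=2|D|$. Your construction obtains the same constant with fewer moving parts, so this is a legitimate alternative proof of the inequality.

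The one genuine shortfall is the tightness claim, which is part of the statement: you only name candidate families and explicitly defer their verification. The paper settles this with two immediate examples: $K_{2,2}$ (which is your $C_4$, where $\gamma_{cs}=\gamma_2=2$) for the first inequality, and $K_n$ with $n\ge 3$ (where $\gamma_{cs}=1$ and $\gamma_2=2$) for the second; both satisfy $\delta\ge 2$. Your cycle heuristic is also slightly off as stated, since $\gamma_{cs}(C_n)$ is not in general close to $\lceil n/2\rceil$ (for instance $\gamma_{cs}(C_7)=3$ while $\gamma_2(C_7)=4$), though even cycles such as $C_4$ do give equality; and your ``one vertex per block'' idea for the upper bound is realized most simply by a single complete graph or triangle. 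So the gap is easily closed, but as written the tightness part is not proved.
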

\begin{proof}
$\gamma_{cs}(G)\leq \gamma_2(G)$ holds as every minimal double dominating set of $G$ is also a \textsc{CSDS} of $G$ (by Lemma \ref{minimal_D_2}). Next we will prove that $\gamma_2(G)\leq 2\gamma_{cs}(G).$ Let $D$ be a $\gamma_{cs}$ set of $G.$ Let $D'=\{x\in D \mid EPN(x, D)\neq \emptyset \},$ and $D''=D\setminus D'.$ Let $A=\bigcup\limits_{x\in D'}^{}EPN(x,D).$ Then, every vertex $v\in (V\setminus \{D\cup A\})$ has at least two neighbors in $D''.$ By Proposition \ref{CSDprop}, for every vertex $x\in S$ there exists at least one vertex $x^*\in V\setminus S$ and $x^*\in EPN(x, S)$ such that $d_G(x^*) \geq |EPN(x, S)|$. Let $A'\subseteq A$ such that $A'$ contains exactly one vertex $x^*$ of each $EPN(x, D)$ for every $x\in D'.$ Thus, $|A'|=|D'|.$ Note that, every vertex in $A\setminus A'$ has at least two neighbors in $D'\cup A'.$ Let $B'$ be the smallest subset of $(V\setminus D)\setminus A'$ that dominates $D''.$ Since every vertex of $D''$ has $EPN(x, D'')=\emptyset,$ we obtain $|B'|\leq |D''|.$ Thus, $D\cup A'\cup B'$ is a double dominating set of $G.$ Hence, $\gamma_2(G)\leq |D|+|A'|+|B'|\leq |D|+|D'|+|D''|=2|D|=2\gamma_{cs}(G).$

These two inequalities are tight for the graphs $K_{2,2}$ and $K_n$ ($n \geq 3$), respectively.
\end{proof}

\begin{theorem} \label{D2-algo}
\textsc{Min Double Dom} can be approximated with an approximation ratio of $O(\ln |V|)$, where $V$ is the vertex set of the input graph $G$. It can also  be approximated within a factor of $1+ \ln (\triangle +2)$, where $\triangle$ is the maximum degree of $G$.
\end{theorem}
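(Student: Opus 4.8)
The plan is to reduce \textsc{Min Double Dom} to the classical \textsc{Min Set Cover} problem and invoke the well-known greedy approximation guarantee. First I would observe that double domination is a covering problem in disguise: for a graph $G=(V,E)$ we want $D\subseteq V$ so that every vertex outside $D$ has at least two neighbors in $D$, equivalently every vertex $v\in V$ must have $|N[v]\cap D|\ge 2$ — this reformulation (counting closed neighborhoods) is cleaner because it absorbs the ``$v\in D$ contributes to itself'' bookkeeping. To cast this as set cover, I would create, for each vertex $v$, two ``demand'' elements $v^{(1)}$ and $v^{(2)}$ (so the universe has size $2|V|$), and for each vertex $u\in V$ a set $\mathcal{S}_u$ that covers the two demand copies of every vertex in $N[u]$. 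Picking $u$ into $D$ then ``pays'' for one unit of demand at each closed neighbor, and the constraint ``$D$ is a double dominating set'' becomes exactly ``the chosen sets cover all $2|V|$ demand elements'' — with the subtlety that two distinct chosen sets are needed to finish any given vertex, which is automatically handled since each set contributes at most one copy per vertex. (One must be slightly careful: a single set $\mathcal S_u$ covers both copies $v^{(1)},v^{(2)}$, so I would instead let $\mathcal S_u$ cover only \emph{one} arbitrary copy per neighbor, or better, use a partition-type gadget; alternatively use the standard ``multicover''/set multicover formulation, for which the greedy algorithm still achieves an $H_k$-type bound.)

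The cleanest route is therefore to appeal to the \emph{set multicover} (a.k.a. covering integer program) version: we have a ground set $V$, each element $v$ with coverage requirement $r_v=2$, and for each $u\in V$ a set $N[u]$; we seek a minimum collection of these sets covering each $v$ at least $r_v$ times. The greedy algorithm for set multicover is known to give an $H(\max_u |N[u]|)$-approximation, i.e. $1+\ln(\Delta+1)$ in our case, and in general an $O(\ln |V|)$-approximation since $\max_u|N[u]|\le |V|$. I would state this as the key cited fact (Dobson, or Rajagopalan--Vazirani, or the treatment in Vazirani's book) and then verify that minimum double dominating sets correspond exactly to minimum feasible multicovers: a set $D$ is a double dominating set iff choosing $\{N[u]: u\in D\}$ multicovers $V$ with requirement $2$, and the cardinalities match. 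Hence an $\alpha$-approximation for set multicover yields an $\alpha$-approximation for \textsc{Min Double Dom}, giving both claimed bounds.

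For the degree bound $1+\ln(\Delta+2)$ specifically, I would track the parameter more carefully: in the multicover instance each set $N[u]$ has size at most $\Delta+1$, but since we may also pre-include low-degree vertices or handle the ``$+2$'' from the requirement, a direct greedy analysis gives $1+\ln(\Delta+2)$. The honest way is to run greedy directly on the double-domination covering instance where the ``potential'' of a set $N[u]$ at the start is $|N[u]|$ counted with multiplicity toward the unmet requirements — this potential is at most $\deg(u)+1\le \Delta+1$, but one subtle off-by-one arises because the requirement is $2$ rather than $1$, and carefully bounding the harmonic sum yields $\ln(\Delta+2)$ rather than $\ln(\Delta+1)$; I would reproduce the standard dual-fitting or layered-charging argument to nail this constant.

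The main obstacle I expect is the multiplicity/off-by-one bookkeeping: set cover gives $H_k = 1+\ln k$ cleanly, but double domination is genuinely a multicover, and one must either (i) correctly cite a multicover greedy theorem and check its hypotheses, or (ii) redo the greedy analysis from scratch, being careful that a single chosen set contributes at most one toward each vertex's requirement of $2$ (so the ``effective set size'' and the harmonic-number accounting are with respect to the residual requirements, not the raw set). Everything else — the equivalence of the two problems, the reduction, and the translation of approximation ratios — is routine once this point is pinned down.
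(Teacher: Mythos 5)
Your overall strategy---casting double domination as a covering problem and invoking the greedy (multi)cover guarantee---is the same as the paper's, but your very first step is wrong for the problem as the paper defines it, and the error propagates. The paper requires $|N(v)\cap D|\ge 2$ only for $v\in V\setminus D$; a vertex placed in $D$ carries no residual requirement. Your ``equivalent'' reformulation, $|N[v]\cap D|\ge 2$ for \emph{every} $v\in V$, is strictly stronger: it additionally forces every vertex of $D$ to have a neighbour inside $D$. These are genuinely different problems with different optima; in $C_4$, two opposite vertices form an optimal solution of value $2$ under the paper's definition, while your reformulation forces value $3$. Consequently your set-multicover instance (sets $N[u]$, requirement $2$ on each element) models the stronger problem, and the greedy ratio you invoke is measured against the wrong optimum. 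The paper sidesteps exactly this with a small but essential twist: it builds a \emph{multiset} multicover instance in which $S_v$ contains $v$ with multiplicity $2$ and each neighbour of $v$ once, so that selecting $v$ fully pays $v$'s own requirement; feasible multicovers then correspond precisely to double dominating sets in the paper's sense, and the cited bounds $O(\ln |V|)$ and $1+\ln(\Delta+2)$ (the $+2$ reflecting the total multiplicity $d(v)+2$ of $S_v$) transfer directly.

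Your route can be partially repaired: any set feasible for your stronger condition is feasible for the paper's problem, and the stronger optimum is at most twice the paper's optimum (for each chosen vertex with no chosen neighbour, add one of its neighbours), so you would still obtain an $O(\ln |V|)$-approximation, but with an extra factor of $2$. The explicit bound $1+\ln(\Delta+2)$ claimed in the theorem does not follow from your argument as written: after the factor-$2$ loss you only get roughly $2+2\ln(\Delta+1)$, and the ``direct greedy analysis'' you promise in the last paragraph is precisely the bookkeeping that the multiset formulation (the vertex counted twice in its own set) is designed to settle, not an off-by-one that can be waved away. In short: same approach in spirit, but the claimed equivalence is a genuine gap, and the clean fix is the paper's multiset multicover reduction.
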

\begin{proof}
Given an instance $G=(V, E)$ of \textsc{Min Double Dom}, we construct a multiset multicover problem \cite{vazirani2001approximation} as follows. We take $V$ as the universe and for each vertex $v \in V$ we construct a multiset 
$S_v = N[v] \cup \{v\}$. In $S_v$, $v$ is appearing twice whereas other elements appear exactly once. We set the requirement of each vertex $v \in V$ as 2. Minimum Multiset Multicover problem can be approximated within a factor of $O(\ln |V|)$  (also $1+ \ln (\triangle +2)$) \cite{vazirani2001approximation}. Therefore, 
\textsc{Min Double Dom} can be approximated within a factor of $O(\ln |V|)$ (also $1+ \ln (\triangle +2)$).
\end{proof}

Next, we propose an algorithm (described in Algorithm \ref{Apx-CSD}) to compute an approximate solution of \textsc{Min Co-secure Dom}. This algorithm computes a minimal double dominating set $D_2$ of the input graph $G$ (with $\delta(G) \geq 2$) using the approximation algorithm described in Theorem \ref{D2-algo} and returns it as a \textsc{CSDS} of $G$. By Lemma \ref{minimal_D_2}, $D_2$ is also a \textsc{CSDS} of $G$. It is easy to observe that Algorithm \ref{Apx-CSD} runs in polynomial time.

\begin{algorithm2e}[H]
%\SetAlgoLined
\textbf{Input:} A graph $G=(V,E).$\\
\textbf{Output:} A minimum \textsc{CSDS} of $G.$\\
\Begin{
Compute a double dominating set $D_2$ of $G$ (as described in Theorem \ref{D2-algo});\\
$S=D_2;$\\
 \Return $S$;}
 \caption{\textsc{Approx-CSD}}
 \label{Apx-CSD}
\end{algorithm2e}

\begin{theorem} \label{thm-ln-apx}
\textsc{Min Co-secure Dom} can be approximated within a factor of $O(\ln |V|)$, for graphs with $\delta(G) \geq 2$. It can also  be approximated within a factor of $2+ 2\ln (\triangle +2)$, where $\triangle$ is the maximum degree of $G$.
\end{theorem}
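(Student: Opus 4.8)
The plan is to chain together the two approximation results already established and the structural bounds from Lemma~\ref{CSD_bound_with_D2}. First I would observe that Algorithm~\ref{Apx-CSD} outputs a set $S = D_2$, where $D_2$ is a double dominating set of $G$ produced by the algorithm of Theorem~\ref{D2-algo}; by Lemma~\ref{minimal_D_2} (applicable since $\delta(G) \geq 2$ and we may assume $G$ is connected with at least $3$ vertices, handling the trivial cases separately) $D_2$ is indeed a \textsc{CSDS} of $G$, so the algorithm returns a feasible solution. It remains to bound $|S| = |D_2|$ against $\gamma_{cs}(G)$.

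Next I would assemble the inequality chain. Theorem~\ref{D2-algo} gives $|D_2| \leq O(\ln |V|) \cdot \gamma_2(G)$ (and, in the degree-parametrized form, $|D_2| \leq (1 + \ln(\triangle+2))\,\gamma_2(G)$). Lemma~\ref{CSD_bound_with_D2} gives $\gamma_2(G) \leq 2\gamma_{cs}(G)$. Combining,
\[
|S| = |D_2| \leq O(\ln|V|)\cdot \gamma_2(G) \leq 2\,O(\ln|V|)\cdot \gamma_{cs}(G) = O(\ln|V|)\cdot \gamma_{cs}(G),
\]
which gives the first claim, and
\[
|S| \leq \bigl(1 + \ln(\triangle+2)\bigr)\gamma_2(G) \leq 2\bigl(1+\ln(\triangle+2)\bigr)\gamma_{cs}(G) = \bigl(2 + 2\ln(\triangle+2)\bigr)\gamma_{cs}(G),
\]
giving the second. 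Since Algorithm~\ref{Apx-CSD} runs in polynomial time (the double-domination approximation is polynomial, being a reduction to minimum multiset multicover), this establishes the stated approximation ratios.

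Honestly, there is no real obstacle here: the theorem is essentially a corollary obtained by composing Theorem~\ref{D2-algo} with the tight bound $\gamma_2(G) \leq 2\gamma_{cs}(G)$ of Lemma~\ref{CSD_bound_with_D2}, using the fact that any minimal double dominating set is co-secure dominating when $\delta(G)\geq 2$. The only points that deserve a sentence of care are: (i) ensuring the output of the multiset-multicover-based algorithm can be taken to be \emph{minimal} as a double dominating set (or, alternatively, noting that one can greedily prune it to a minimal one in polynomial time without increasing its size), so that Lemma~\ref{minimal_D_2} applies and feasibility as a \textsc{CSDS} is guaranteed; and (ii) dispatching the degenerate cases ($G$ disconnected — apply the argument componentwise — or $|V| \leq 2$, which cannot occur with $\delta(G)\geq 2$ and no isolated vertices except $K_2$-free considerations, easily checked directly). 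With those remarks in place the proof is a two-line computation.
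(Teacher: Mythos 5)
Your proposal is correct and follows essentially the same route as the paper: chain the bound $|S|\leq O(\ln|V|)\,\gamma_2(G)$ (resp.\ $|S|\leq(1+\ln(\triangle+2))\gamma_2(G)$) from Theorem~\ref{D2-algo} with $\gamma_2(G)\leq 2\gamma_{cs}(G)$ from Lemma~\ref{CSD_bound_with_D2}, with feasibility of the output as a \textsc{CSDS} guaranteed by Lemma~\ref{minimal_D_2}. Your added remark about pruning the multicover solution to a minimal double dominating set is a sensible point of care that the paper glosses over, but it does not change the argument.
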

\begin{proof}
Let $S$ be the \textsc{CSDS} of $G$ computed by the Algorithm \ref{Apx-CSD}.  By Theorem \ref{D2-algo}, we have $|S| \leq O(\ln |V|)\gamma_2(G)$. Also, by Lemma \ref{CSD_bound_with_D2} we have $$|S| \leq O(\ln |V|) \gamma_2(G) \leq 2 O(\ln |V|) \gamma_{cs}(G) = O(\ln |V|) \gamma_{cs}(G).$$

Similarly, it can be observed that 
$|S|\leq [2+ 2\ln (\triangle +2)] \gamma_{cs}(G).$
\end{proof}

\section{Lower bound on approximation ratio}
In this section, we obtain a lower bound on the approximation ratio of \textsc{Min Co-secure Dom} for some subclasses of bipartite graphs. To obtain our lower bound, we establish an approximation preserving reduction from \textsc{Min Dom} to \textsc{Min Co-secure Dom}. We need the following lower bound result on \textsc{Min Dom}.

\begin{theorem}\label{chlebik-dinur}
(\cite{chlebik2008approximation,dinur2014analytical})
Unless \textsc{P=NP}, \textsc{Min Dom} can not be approximated within a factor of $(1-\varepsilon)\ln|V|$, for any $\varepsilon > 0$. Such a result holds for \textsc{Min Dom} even when restricted to bipartite graphs.
\end{theorem}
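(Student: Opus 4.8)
The plan is to derive this from the tight inapproximability of \textsc{Min Set Cover} via a standard reduction that, crucially, can be made to output bipartite graphs. Recall the result of Dinur and Steurer \cite{dinur2014analytical}: unless \textsc{P=NP}, \textsc{Min Set Cover} on a ground set of $n$ elements cannot be approximated within $(1-\varepsilon)\ln n$, and in the hard instances the number of sets is polynomially bounded --- indeed may be kept below $n^{1+\eta}$ for an arbitrarily small constant $\eta>0$ --- while the optimum is super-constant. For a family $\mathcal{S}$ over a ground set $U$, write $\tau(\mathcal{S})$ for the minimum size of a subcover.

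First I would set up the reduction. Given $U=\{u_1,\dots,u_n\}$ and $\mathcal{S}=\{S_1,\dots,S_m\}$, form the graph $G$ with vertex set $\{a_1,\dots,a_m\}\cup U\cup\{z,z'\}$, edges $a_ju_i$ whenever $u_i\in S_j$, edges $a_jz$ for all $j$, and the edge $zz'$. This $G$ is bipartite, with parts $\{a_1,\dots,a_m,z'\}$ and $U\cup\{z\}$. The pendant $z'$ forces $z$ into every optimal dominating set, and $z$ in turn dominates all the ``set vertices'' $a_1,\dots,a_m$ at once.

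Next I would prove the identity $\gamma(G)=\tau(\mathcal{S})+1$. For the upper bound, if $\mathcal{C}\subseteq\mathcal{S}$ is a subcover then $\{z\}\cup\{a_j:S_j\in\mathcal{C}\}$ dominates $G$. For the lower bound, start from any dominating set $D$: since the only neighbour of $z'$ is $z$, either $z\in D$ or $z'\in D$, and in the latter case we may replace $z'$ by $z$; we may also replace each $u_i\in D$ by some $a_j\ni u_i$. The resulting $D'\subseteq\{a_1,\dots,a_m\}\cup\{z\}$ has $|D'|\le|D|$, contains $z$, and still dominates $G$ (the $a_j$ are dominated by $z$, and each $u_i$ keeps a dominator, either an $a_j$ that was retained or an $a_j\ni u_i$ introduced by the substitution). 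Hence $\{S_j:a_j\in D'\}$ is a subcover, giving $\tau(\mathcal{S})\le|D|-1$. Since the optimum is super-constant on the hard instances, this reduction preserves approximation up to lower-order terms.

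Finally the parameters fit together: with $|V(G)|=m+n+2\le 2n^{1+\eta}$ one gets $\ln|V(G)|\le(1+\eta)\ln n+O(1)$, so a hypothetical $(1-\varepsilon)\ln|V(G)|$-approximation for \textsc{Min Dom} on bipartite graphs would --- taking $\eta$ small enough in terms of $\varepsilon$ --- yield a $(1-\varepsilon')\ln n$-approximation for \textsc{Min Set Cover} for some $\varepsilon'>0$, contradicting \cite{dinur2014analytical}; the restriction to bipartite graphs and the bounded-degree refinements are handled along the lines of \cite{chlebik2008approximation}. I expect the single delicate point to be this last balancing step --- keeping the number of added vertices near-linear so that $\ln|V(G)|$ stays within a $(1+o(1))$ factor of $\ln n$, which is why the argument leans on the fine parameters of the Dinur--Steurer construction; the bipartiteness of $G$ and the identity $\gamma(G)=\tau(\mathcal{S})+1$ are then routine.
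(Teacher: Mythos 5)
This statement is not proved in the paper at all: it is quoted as a known result and attributed to \cite{chlebik2008approximation,dinur2014analytical}, so there is no in-paper argument to compare yours against. What you have written is essentially the standard proof that underlies those citations: the pendant-forced reduction from \textsc{Min Set Cover} (set vertices $a_j$, element vertices $u_i$, a hub $z$ with pendant $z'$) is the classical construction, your identity $\gamma(G)=\tau(\mathcal{S})+1$ and the replacement argument ($z'\to z$, $u_i\to a_j\ni u_i$) are correct, and the graph is indeed bipartite as you partition it. You have also correctly identified the only delicate point: the argument works only if the hard set-cover instances have the number of sets near-linear in the universe size (so that $\ln|V(G)|=(1+o(1))\ln n$) and super-constant optimum (so that the additive $+1$ and, if needed, brute force for small optima are negligible). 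Those two facts are exactly what you import from the fine parameters of the Feige/Dinur--Steurer construction rather than prove; as stated they are asserted, not verified, so your write-up is a correct proof sketch conditional on those parameter guarantees, which is the same division of labor the paper makes by citing \cite{chlebik2008approximation,dinur2014analytical} wholesale.
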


By using this theorem, we will prove similar lower bound results for \textsc{Min Co-secure Dom} for two subclasses of bipartite graphs, namely  perfect elimination bipartite graphs and star convex bipartite graphs.

\begin{theorem}\label{PEBG_CSD}
Unless \textsc{P=NP},
\textsc{Min Co-secure Dom} for a perfect elimination bipartite graph $G=(V,E)$ can not be approximated within $(1-\varepsilon)\ln |V|$, for any $\varepsilon>0$.
\end{theorem}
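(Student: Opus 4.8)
The plan is to construct an approximation-preserving reduction from \textsc{Min Dom} on bipartite graphs (Theorem \ref{chlebik-dinur}) to \textsc{Min Co-secure Dom} on perfect elimination bipartite graphs, in such a way that the target instance has $|V'|$ polynomially bounded in terms of the source instance's vertex count and the optimal co-secure domination number is essentially the optimal domination number plus a controllable additive term. Given a bipartite instance $G=(X,Y,E)$ of \textsc{Min Dom}, I would build $G'$ by attaching, to each vertex of $G$, a small gadget — typically a pendant edge (or a short path/private $P_2$) — so that (i) every gadget vertex forces its attachment point (or a neighbour) into any co-secure dominating set, thereby ``simulating'' an ordinary dominating set on the original vertices, and (ii) the extra gadget vertices themselves satisfy the co-secure replacement condition cheaply. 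Care is needed because \textsc{Min Co-secure Dom} requires $S$ to be a dominating set \emph{and} each $u\in S$ to have an external neighbour $v$ with $(S\setminus\{u\})\cup\{v\}$ dominating; pendant vertices tend to be problematic for co-secure domination (a pendant in $S$ has only its support as external neighbour, and a pendant not in $S$ forces its support in), so the gadget must be engineered — e.g. by adding two pendant-type vertices per original vertex, or a $C_4$-like attachment — to keep these constraints satisfiable with $O(1)$ extra cost per vertex.

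The key steps, in order, are: (1) \textbf{Define the gadget and the graph $G'$}, making the vertex set $V' = X \cup Y \cup (\text{gadget vertices})$ with $|V'| = O(|V|)$, so that $(1-\varepsilon)\ln|V'| \geq (1-\varepsilon')\ln|V|$ for a slightly smaller $\varepsilon'$, which is what lets the inapproximability transfer. (2) \textbf{Verify $G'$ is a perfect elimination bipartite graph}, by exhibiting an explicit perfect elimination ordering of pairwise non-adjacent edges: one would first eliminate the bi-simplicial gadget edges (pendant/attachment edges are typically bi-simplicial since a pendant has a singleton neighbourhood), then argue the remaining graph after deleting those endpoints is edgeless — this forces the gadget design, since the \emph{original} edges of $G$ generally are \emph{not} bi-simplicial, so every original vertex must be consumed by (an endpoint of) a gadget edge before we run out. (3) \textbf{Prove the two-way correspondence between solutions}: from a dominating set $D$ of $G$ build a co-secure dominating set of $G'$ of size $|D| + c|V|$ (or $|D| + O(1)\cdot$something) by adding a fixed selection of gadget vertices; conversely, from a co-secure dominating set $S'$ of $G'$, extract a dominating set of $G$ of size at most $|S'| - (\text{guaranteed gadget contribution})$ by replacing any gadget vertex in $S'$ with its attachment point and intersecting with $X\cup Y$. (4) \textbf{Combine} the size bound $\gamma_{cs}(G') = \gamma(G) + \Theta(|V|)$ (or an inequality tight enough in both directions) with Theorem \ref{chlebik-dinur}: a hypothetical $(1-\varepsilon)\ln|V'|$-approximation for \textsc{Min Co-secure Dom} on $G'$ would yield a $(1-\varepsilon')\ln|V|$-approximation for \textsc{Min Dom} on bipartite $G$, contradiction — here one needs $\gamma(G)$ to dominate the additive gadget term asymptotically, which can be arranged by first ``blowing up'' $G$ (replacing it with $k$ disjoint copies, or using that WLOG $\gamma(G) = \Omega(|V|^{\alpha})$ on the hard instances) so the additive term is lower-order.

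The main obstacle I expect is the simultaneous satisfaction of three competing requirements on the gadget: it must make $G'$ a \emph{perfect elimination} bipartite graph (which is restrictive — it essentially forces a matching structure of bi-simplicial edges that eats all vertices), it must \emph{force} original-vertex domination behaviour (so a co-secure dominating set projects to a dominating set of $G$), and it must keep the \emph{co-secure replacement condition} satisfiable for every gadget vertex without inflating the cost by more than a constant per original vertex. A single pendant edge fails the third condition (pendant vertices and their supports interact badly with replacement), while richer gadgets risk destroying the perfect-elimination property or introducing edges among original vertices that are not bi-simplicial. I would therefore expect the correct gadget to attach, to each original vertex $v$, something like two new leaves joined through one new vertex (or a $P_3$/$P_4$ pendant structure), chosen so that exactly one gadget vertex per $v$ is forced into $S$ and that vertex has a valid external replacement \emph{within the gadget}; pinning down that gadget and checking all of the above — especially re-verifying the perfect elimination ordering survives and that the replacement condition holds for the \emph{original} vertices that end up in $S$ — is the delicate, case-heavy part of the argument. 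The star-convex-bipartite case (the next theorem) would presumably reuse the same reduction skeleton with a different gadget chosen to make the $X$-side a star.
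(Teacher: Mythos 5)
Your high-level skeleton (reduce from bipartite \textsc{Min Dom}, exhibit a perfect elimination ordering, prove a two-way solution correspondence, transfer the $(1-\varepsilon)\ln$ bound via Theorem~\ref{chlebik-dinur}) matches the paper, but the quantitative core of your reduction does not work. You attach an $O(1)$-cost gadget to \emph{every} vertex, and your own accounting gives $\gamma_{cs}(G') = \gamma(G) + \Theta(|V|)$. An additive overhead of order $|V|$ cannot preserve a $(1-\varepsilon)\ln|V|$ lower bound: a hypothetical $(1-\varepsilon)\ln|V'|$-approximation returns a set of size at most $(1-\varepsilon)\ln|V'|\,\bigl(\gamma(G)+\Theta(|V|)\bigr)$, and even after subtracting the forced gadget vertices you only extract a dominating set of size about $\gamma(G)\ln|V| + \Theta(|V|\ln|V|)$, which is worse than the trivial solution. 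Your proposed repairs do not rescue this: taking disjoint copies scales $\gamma(G)$ and the additive term by the same factor, and you cannot assume $\gamma(G)=\Omega(|V|)$ (or even $\Omega(|V|/\ln|V|)$) on the hard instances, since on such instances \textsc{Min Dom} is approximable within a constant (respectively within $o(\ln|V|)$) and Theorem~\ref{chlebik-dinur} cannot be invoked there. What a logarithmic-hardness transfer needs is an additive overhead of $O(1)$, combined with the standard threshold trick (brute force when the optimum is below a fixed $l$, and absorb the $+c$ into a factor $(1+c/l)$ otherwise); you also never pin down a concrete gadget, which you yourself flag as the delicate part.

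The paper achieves the constant overhead with a \emph{global} gadget rather than per-vertex forced gadgets: attach one private vertex $a_i$ to each $v_i$, but join all the $a_i$ to two new hubs $x$ and $z$ (each carrying its own pendant $y$, respectively $w$), and join all the $v_i$ to a third hub $s$ (with pendant $t$). Then only $x,z,s$ are forced into any minimal \textsc{CSDS}, so $\gamma_{cs}(G')=\gamma(G)+3$; the $a_i$ are never needed in the solution (they are dominated twice, by $x$ and $z$) and serve only to give each original vertex in $S$ a legal replacement, while it is the co-secure condition at $s$ --- not any forcing of attachment points into $S$, which is the mechanism you propose --- that guarantees $S\cap V$ dominates $G$. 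The matching $\{st,\,xy,\,zw,\,v_1a_1,\ldots,v_na_n\}$ is the perfect elimination ordering. This constant-size hub construction and the resulting $\gamma_{cs}(G')=\gamma(G)+3$ identity, together with the threshold argument, are the missing ideas; without them your reduction does not preserve the $\ln|V|$ factor.
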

\begin{proof}
Given a graph $G=(V, E)$, an instance of \textsc{Min Dom}, we construct a graph $G'=(V', E')$, an instance of \textsc{Min Co-secure Dom}, as follows. Here we assume that $V=\{v_1, v_2, \ldots, v_n\}$. After making a copy of $G$, we introduce $n$ new vertices $a_1, a_2, \ldots, a_n$ and $n$ edges $v_ia_i$, for $i \in [n]$.Then we introduce 6 vertices $s, t, x, y, w, z$ and the edges $st, xy, wz$. Finally, we introduce the edge set $\{a_iv_i, v_is, a_ix, a_iz \mid i\in [n]\}$. It is easy to observe that $V'= V \cup \{a_i|i\in [n]\} \cup \{x,y,z,w, s,t\}$ and $E'=E\cup \{a_iv_i, v_is, a_ix, a_iz \mid i\in [n]\} \cup \{xy,zw, st\}$ and it is a polynomial time construction as $|V'|=2|V|+6$ and $|E'|=|E|+4|V|+3.$ $G'$ is a perfect elimination bipartite graph with the perfect elimination ordering $\{st,xy,zw, v_1a_1, v_2a_2,\cdots, v_na_n\}.$ For an illustration of this construction, we refer to Figure \ref{fig:pebg_csd}.

 \begin{figure}[htbp]
     \centering
     \includegraphics[width=9cm]{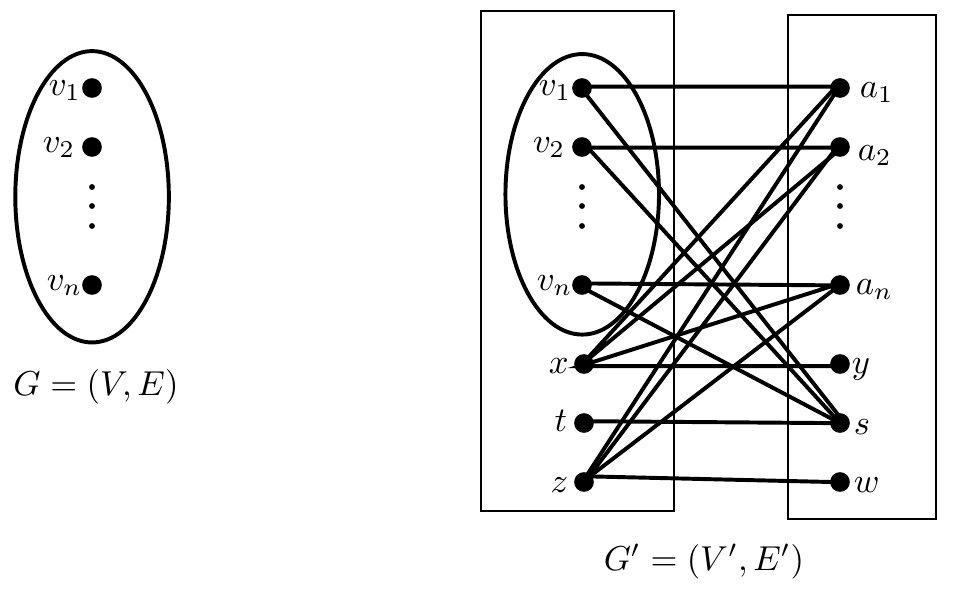}
     \caption{An illustration of the construction of $G'$ from $G$ in the proof of Theorem \ref{PEBG_CSD}}
     \label{fig:pebg_csd}
 \end{figure}

\begin{claim}\label{PEBG-lowerbound}
The graph $G$ has a dominating set of cardinality at most $k$ if and only if $G'$ has a \textsc{CSDS} of cardinality at most $k'=k+3$.
\end{claim}
\begin{proof}
Let $D$ be a minimal dominating set of $G$. It is easy to check that $S=D \cup \{x,z,s\}$ is a \textsc{CSDS} of $G'.$ Thus, $|S|=|D|+3.$

Conversely, let $S$ be a minimal \textsc{CSDS} of $G'$. $S \cap \{x, y\} = \{x\}$ as $y$ is the only degree 1 vertex adjacent to $x$. Similarly, $S\cap \{s, t\} = \{s\}$ and $S\cap \{w, z\} = \{z\}$. We will assume that $S$ does not contain any $a_i$ vertex. This is because, each $a_i$ vertex is dominated by at least two vertices $x$ and $z$, and if $a_i \in S$ then we will replace the vertex $a_i$ with $v_i$ in $S$. Now, we define $D= S \cap V$. If $D$ is a dominating set of $G$ then we are done. Otherwise, there exists a vertex $v_k$ which is not dominated by any vertex of $D.$ Now, $v_k$ is dominated only by $s\in S$ and $(S\setminus \{s\})\cup \{v\}$ is not a dominating set, for every $v \in (N_{G'}(s) \setminus S)$. This is a contradiction. Hence, $D$ is a dominating set of $G$ with $|S| = |D|+3$.
\end{proof}

Let us assume that there exists some (fixed) $\varepsilon >0$ such that \textsc{Min Co-secure Dom} for perfect elimination bipartite graphs with $|V'|$ vertices can be approximated within a ratio of $\alpha=(1-\varepsilon)\ln|V'|$ by a polynomial time algorithm $\mathbb A$. Let $l>0$ be a fixed integer with $l > \dfrac{1}{\varepsilon}$. By using algorithm $\mathbb A$, we construct a polynomial time algorithm for \textsc{Min Dom} as described in Algorithm \ref{Alg-Dom1}. 

Initially, if there is a minimum dominating set $D$ of $G$ with $|D|<l,$ then it can be computed in polynomial time. Since the algorithm $\mathbb A$ runs in polynomial time, the Algorithm \ref{Alg-Dom1} also runs in polynomial time. If the returned set $D$ satisfies $|D| < l$ then $D$ is a minimum dominating set of $G$ and we are done. 

Next, we will analyze the case when Algorithm \ref{Alg-Dom1} returned the set $D$ with $|D| \geq l$. By Claim  \ref{PEBG-lowerbound} we have $|S_o| = |D_o| + 3$, where $D_o$ and $S_o$ are minimum dominating set of $G$ and minimum \textsc{CSDS} of $G'$, respectively. Here $|D_o| \geq l$. 

\begin{algorithm2e}[H]
\textbf{Input:} A graph $G=(V,E).$\\
\textbf{Output:} A minimum dominating set $D$ of $G.$\\
\Begin{
        \uIf{there is a minimum dominating set $D$ of $G$ with $|D|<l$}{\Return $D$;}
        \Else{
        Construct the graph $G'$ as described above;\\
        Compute a \textsc{CSDS} $S$ in $G'$ using $\mathbb A$;\\
        $D=S\cap V$;\\
        \Return $D$;}}
\caption{\textsc{Approx-DOM1}}
\label{Alg-Dom1}
\end{algorithm2e}
Now,
$|D|\leq |S|-3 < |S| \leq \alpha |S_o| = \alpha (|D_o|+3)=\alpha (1+\frac{3}{|D_o|})|D_o|\leq \alpha (1+\frac{3}{l})|D_o|.$
This implies that Algorithm \ref{Alg-Dom1} approximates \textsc{Min Dom} within a ratio of $\alpha (1+\frac{3}{l}).$ Since $\frac{1}{l} < \varepsilon$
$$\alpha \bigg(1+\frac{3}{l}\bigg)\leq (1-\varepsilon)(1+3\varepsilon)\ln |V'|=(1-\varepsilon')\ln |V|,$$ where $\varepsilon'=3\varepsilon^2+2\varepsilon$ as $\ln |V'|=\ln(2|V|+6)\approx \ln |V|$ for sufficiently large value of $|V|.$

Therefore, Algorithm \ref{Alg-Dom1} approximates \textsc{Min Dom} within a ratio of $(1-\varepsilon)\ln|V|$ for some $\varepsilon>0.$ This contradicts the lower bound result in Theorem \ref{chlebik-dinur}.
 \end{proof}
 
Next, we prove the inapproximability of \textsc{Min Co-secure Dom} in star convex bipartite graphs by using the Theorem \ref{chlebik-dinur}.

\begin{theorem}\label{starconvex_CSD}
\textsc{Min Co-secure Dom} for a star convex bipartite graph $G=(V,E)$ can not be approximated within $(1-\varepsilon)\ln |V|$ for any $\varepsilon>0$, unless \textsc{P=NP.}
\end{theorem}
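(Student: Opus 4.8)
The plan is to mimic the structure of the proof of Theorem~\ref{PEBG_CSD}, replacing the perfect-elimination-bipartite gadget with one that makes the constructed graph star convex bipartite while preserving the relationship $\gamma_{cs}(G') = \gamma(G) + c$ for some small constant $c$. First I would take an instance $G=(V,E)$ of \textsc{Min Dom} with $V=\{v_1,\dots,v_n\}$ (which we may assume is bipartite, by Theorem~\ref{chlebik-dinur}), and build $G'$ by adding, for each $v_i$, a pendant-type gadget that forces $v_i$ or a nearby vertex into any \textsc{CSDS}, together with a small fixed gadget of a few extra vertices playing the role of $\{s,t,x,y,w,z\}$. The key design constraint is that one side of the bipartition of $G'$, say $X$, must admit a star $H=(X,E_X)$ — i.e.\ a designated center vertex $c^\ast \in X$ adjacent in $H$ to every other vertex of $X$ — such that for every $y$ in the other side $Y$, the neighborhood $N_{G'}(y)$ is connected in $H$, which for a star means $N_{G'}(y)$ is a subset of $X$ that either is a subset of $\{c^\ast\}\cup\{\text{one leaf}\}$ or contains $c^\ast$. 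So I would route all the auxiliary edges through a single global vertex (the star center) so that every right-side vertex sees either just a couple of vertices or the center.

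The core steps, in order, are: (1) describe the construction of $G'$ explicitly, listing $V'$, $E'$, the bipartition $(X,Y)$, and the star $H$ with its center, and verify $|V'|$ and $|E'|$ are polynomially bounded (ideally $|V'| = 2|V| + O(1)$ so that $\ln|V'| \approx \ln|V|$); (2) prove the analogue of Claim~\ref{PEBG-lowerbound}, namely $G$ has a dominating set of size $\le k$ iff $G'$ has a \textsc{CSDS} of size $\le k + c$ — the forward direction by exhibiting $D \cup (\text{fixed gadget vertices})$ as a \textsc{CSDS} (checking domination and, via Proposition~\ref{CSDprop}, that each vertex can be swapped out), and the reverse direction by arguing that any minimal \textsc{CSDS} of $G'$ must contain the forced gadget vertices, may be assumed to avoid the $a_i$-type vertices (swapping $a_i \to v_i$ as before), and then $S \cap V$ must dominate $G$ because an undominated $v_k$ could only be covered by a forced gadget vertex whose swap-out fails; (3) run the identical gap-amplification / reduction argument: assume a polynomial $(1-\varepsilon)\ln|V'|$-approximation $\mathbb{A}$ for \textsc{Min Co-secure Dom} on star convex bipartite graphs, fix $l > 1/\varepsilon$, brute-force dominating sets of size $< l$, otherwise run $\mathbb{A}$ on $G'$ and return $S \cap V$, and conclude via the inequality $|D| < |S| \le \alpha(|D_o| + c) \le \alpha(1 + c/l)|D_o|$ that \textsc{Min Dom} on bipartite graphs is approximable within $(1-\varepsilon')\ln|V|$, contradicting Theorem~\ref{chlebik-dinur}.

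The main obstacle will be step~(1)--(2) together: designing the gadget so that it is \emph{simultaneously} star convex bipartite and forces the right vertices into the \textsc{CSDS} with the correct additive offset. The tension is that in the perfect-elimination construction the forcing is done with degree-$1$ pendants ($y$ pendant to $x$, $t$ pendant to $s$, $w$ pendant to $z$) and with $a_i$ having two dominators $x,z$; in a star convex graph we have a single privileged center on side $X$, so I expect to need the center vertex to be the common dominator of all the $a_i$'s (or of the $v_i$'s), and then a few leaf vertices of the star to play the pendant-forcing roles, with the $a_i$'s and $v_i$'s distributed appropriately between $X$ and $Y$. One must check carefully that the \textsc{CSDS} swap condition (Proposition~\ref{CSDprop}: $v$ replaces $u$ iff $v \in N(u)$ and $EPN(u,S) \subseteq N[v]$) is satisfiable for the center vertex and for every $v_i \in S$, since a high-degree center can have many external private neighbors, which could make it hard to swap out — this is precisely why the auxiliary pendant/edge gadget ($s,t$ and $x,y$ and $w,z$ analogues) is needed, to give every forced vertex a safe replacement. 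Once the gadget is pinned down, the rest of the proof is a verbatim adaptation of Theorem~\ref{PEBG_CSD} and presents no new difficulty.
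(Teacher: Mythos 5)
Your proposal is a plan rather than a proof: the entire content of this theorem lies in exhibiting a concrete star convex bipartite gadget and verifying the claim analogous to Claim~\ref{PEBG-lowerbound}, and this is precisely the part you defer (``once the gadget is pinned down\dots''). Step~(3), the gap-amplification, is indeed verbatim from Theorem~\ref{PEBG_CSD} and is fine, but without an explicit $G'$, an explicit star $H$ with its center, and a proof of the forward and backward directions of the size correspondence, the argument is not complete. Moreover, the specific design you sketch --- per-vertex pendant-type gadgets $a_i$ as in the perfect-elimination construction, with all auxiliary edges routed through one global center --- runs into concrete obstacles you do not resolve. Since $G$ is bipartite with parts $X$ and $Y$, the vertices $v_i$ (and hence any pendant partners $a_i$) lie on both sides, so a single auxiliary vertex adjacent to all of them destroys bipartiteness; for the same reason the edges $v_is$ of the Theorem~\ref{PEBG_CSD} gadget cannot be transplanted as is (they create triangles whenever $G$ has an edge). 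You would either have to split every auxiliary role into an $X$-side and a $Y$-side copy (changing the additive offset and the forcing analysis) or abandon the per-vertex gadgets altogether, and you have not checked that the resulting graph admits a star on one side under which every neighborhood on the other side is connected, nor that the high-degree center can be swapped out (the point you yourself flag as the tension with Proposition~\ref{CSDprop}).

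For comparison, the paper's construction is much lighter than what you envision and has no per-vertex gadgets at all: starting from bipartite $G=(X,Y,E)$ it adds only four vertices $x,x_0,y,y_0$, joins $x$ to every vertex of $Y\cup\{y,y_0\}$ and $y$ to every vertex of $X\cup\{x,x_0\}$. Star convexity is then automatic with the star on $X'=X\cup\{x,x_0\}$ centered at $x$ (every vertex of $Y'$ is adjacent to $x$, except $y_0$ whose neighborhood is the singleton $\{x\}$), the pendant-like vertices $x_0$ and $y_0$ both force $\{x,y\}$ into any minimal \textsc{CSDS} and serve as the safe replacements for $x$ and $y$, and the correspondence is $\gamma_{cs}(G')=\gamma(G)+2$ with $|V'|=|V|+4$. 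Your instinct that a universal center plus a couple of pendant-forcing leaves is the right shape is sound, but the missing construction is the theorem, so as it stands the proposal has a genuine gap.
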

\begin{proof}
Given a bipartite graph $G=(X, Y, E)$, as an instance of \textsc{Min Dom}, we obtain a star convex bipartite graph $G'=(X', Y', E')$ such that $G$ has a dominating set of cardinality at most $k$ if and only if $G'$ has a \textsc{CSDS} of cardinality at most $k'=k+2.$ Now the construction of $G'$ from $G$ is as follows. After making a
copy of $G,$ we introduce four vertices $x_0, x, y_0, y$. Finally, we make every vertex of $X\cup \{x, x_0\}$ adjacent to $y$ and every vertex of $Y\cup\{y, y_0\}$ adjacent to $x.$ Now, $X'=\{X\}\cup \{x, x_0\}$, $Y'=\{Y\}\cup \{y, y_0\}$ and $E'=\{E\}\cup \{x_iy \mid x_i\in X\}\cup \{y_ix \mid y_i\in Y\}\cup \{x_0y, xy, xy_0\}.$ The new graph $G'=(V',E')$ formed from $G=(V,E)$ has $|V'|=|V|+4$ and $|E'|=|E|+n+3,$ which can be constructed in polynomial time. It can be observed that $G'$ is a star convex bipartite graph with the associated star graph which is shown in Figure \ref{starconvex_csd}.

 \begin{figure}[h]
     \centering
     \includegraphics[width=10cm, height=4cm]{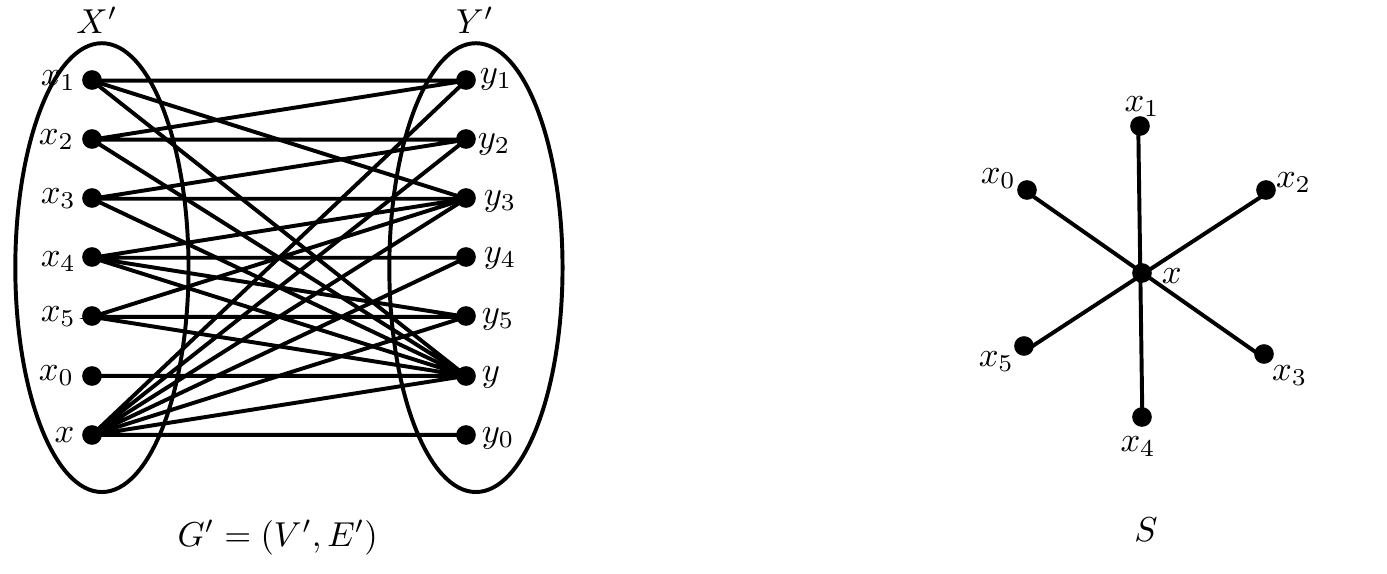}
     \caption{An illustration of the construction of $G'$ from $G$ in the proof of Theorem \ref{starconvex_CSD}}
     \label{starconvex_csd}
 \end{figure}

\begin{claim}\label{SCBG-Lowerbound}
$G$ has a dominating set of cardinality at most $k$ if and only if the graph $G'$ has a \textsc{CSDS} of cardinality at most $k'=k+2.$
\end{claim}
\begin{proof} 
 Suppose $D$ is a minimal dominating set of $G$ and let $S=D\cup \{x,y\}$. Clearly, $S$ is a \textsc{CSDS} of $G'$ with $|S|= |D|+2\leq k+2.$ Conversely, let $S$ be a minimal dominating set of $G'.$ Note that, $|S\cap \{x, y_0\}|=1$, and  similarly $|S\cap \{x_0, y\}|=1.$ If $x_0,y_0\in S,$ observe that $EPN(x_0,S)=y$ and $EPN(y_0,S)=x.$ So, without loss of generality, assume $\{x, y\}\subseteq S.$ Now, let $D=S\setminus \{x, y\}.$ Now we show that $D$ is a dominating set of $G.$ If $D$ is dominating set of $G,$ then we are done. Otherwise, suppose $D$ is not a dominating set of $G.$ Then there exists at least one vertex $v_k\in V(G)$ which is not dominated by any vertex of $D.$ Without loss of generality, assume $v_k\in X,$ then $v_k$ can only be dominated by $y\in Y'.$ Since $S$ is a \textsc{CSDS} of $G',$ $(S\setminus \{y\})\cup \{v_k\}$ is a  dominating set of $G',$ which is a contradiction. Thus, $D$ is a dominating set of $G$ of cardinality $|D|=|S|-2\leq k.$ Therefore, $G$ has a dominating set $D$ of cardinality at most $k$ if and only if $G'$ has a \textsc{CSDS} of cardinality at most $k'=k+2$. This completes the proof of this claim.
\end{proof}

Presume that there exists some (fixed) $\varepsilon >0$ such that \textsc{Min Co-secure Dom} for star convex bipartite graphs having $|V'|$ vertices can be approximated within a ratio of $\alpha=(1-\varepsilon)\ln|V'|$ by using an algorithm $\mathbb A$ that runs in polynomial time. Let $l>0$ be an integer. By using algorithm $\mathbb A$, we construct a polynomial time algorithm Algorithm \ref{Alg-Dom2} for \textsc{Min Dom}. 

\begin{algorithm2e}[H]
%\SetAlgoLined
\textbf{Input:} A bipartite graph $G=(X, Y, E).$\\
\textbf{Output:} A minimum dominating set $D$ of $G.$\\
\Begin{
        \uIf{there is a minimum dominating set $D$ of $G$ with $|D|<l$}{\Return $D$;}
        \Else{
        Construct the graph $G'$ as described above;\\
        Compute a \textsc{CSDS} $S$ in $G'$ using the algorithm $\mathbb A$;\\
        $D=S\cap (X \cup Y)$;\\
        \Return $D$;}}
 \caption{\textsc{Approx-DOM2}}
 \label{Alg-Dom2}
\end{algorithm2e}

Firstly, if there is a minimum dominating set $D$ of $G$ with $|D|<l,$ then it can be computed in polynomial time. Moreover, Algorithm \ref{Alg-Dom2} runs in polynomial time as $\mathbb A$ runs in polynomial time.
Note that, if the returned set $D$ satisfies $|D| < l$ then it is a minimum dominating set of $G$ and we are done. Now, let us assume that the returned set $D$ satisfies $|D| \geq l$.

Let $D_o$ and $S_o$ be a minimum dominating set of $G$ and a minimum \textsc{CSDS} of $G',$ respectively. Then $|D_o|\geq l,$ and $|S_o|=|D_o|+2$ by the above Claim \ref{SCBG-Lowerbound}. Now,
$$ |D|\leq |S|-2 < |S| \leq \alpha |S_o| = \alpha (|D_o|+2)=\alpha \bigg(1+\frac{2}{|D_o|}\bigg)|D_o|\leq \alpha \bigg(1+\frac{2}{l}\bigg)|D_o|.$$
Hence, Algorithm \ref{Alg-Dom2} approximates \textsc{Min Dom} for given bipartite graph $G=(X, Y, E)$ within the ratio $\alpha (1+\frac{2}{l}).$ Let $l$ be the positive integer such that $\frac{1}{l}<\varepsilon.$ Then 
$$\alpha \bigg(1+\frac{2}{l}\bigg)\leq (1-\varepsilon)(1+2\varepsilon)\ln |X' \cup Y'|=(1-\varepsilon')\ln |X \cup Y|,$$ where $\varepsilon'=2\varepsilon^2-\varepsilon$ as $\ln |X' \cup Y'|=\ln(|X \cup Y|+4)\approx \ln |X \cup Y|$ for sufficiently large value of $|X \cup Y|.$

Therefore, Algorithm \ref{Alg-Dom2} approximates \textsc{Min Dom} within a ratio of $(1-\varepsilon)\ln|X \cup Y|$ for some $\varepsilon>0.$ This contradicts the lower bound result in Theorem \ref{chlebik-dinur}. \end{proof}

\section{Complexity on bounded degree graphs}
In this section, we show that \textsc{Min Co-secure Dom} is \textsc{APX-}complete for $3$-regular graphs. Note that the class \textsc{APX} is the set of all optimization problems which admit a $c$-approximation algorithm, where $c$ is a constant. From Theorem \ref{thm-ln-apx} it follows that \textsc{Min Co-secure Dom} can be approximated within a factor of $5.583$ for graphs with maximum degree at most $4$. We improve this approximation factor to $\dfrac{10}{3}$.

We first show that \textsc{Min Co-secure Dom} for $3$-regular graphs is approximable within a factor of $\dfrac{8}{3}$. 

\begin{algorithm2e}[H]
\textbf{Input:} A $3$-regular graph $G=(V,E)$. \\
\textbf{Output:} A \textsc{CSDS} $S$ of $G=(V,E).$\\
\Begin{
   $W'=\emptyset$;\\
   \While{$\exists$ an edge $uv\in E$}{
      $W'=W' \cup \{u,v\}$;\\
      Delete $N[u]\cup N[v]$ from $G$;
    }
    Let $T$ be the remaining vertices;\\
    $W=W'\cup T$;\\
    $S=V \backslash W$;\\
    \Return $S$;
}
\caption{\textsc{Approx-CSD-$3$RG}}
\label{algo-4}
\end{algorithm2e}

\begin{lemma} \label{lem-3-reg}
\textsc{Min Co-secure Dom} is approximable within a factor of $\dfrac{8}{3}$ for $3$-regular graphs.
\end{lemma}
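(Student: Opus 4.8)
The plan is to analyse Algorithm~\ref{algo-4} directly, in two stages: first show that the set $S=V\setminus W$ it returns is a \textsc{CSDS} of $G$, then bound $|S|$ against $\gamma_{cs}(G)$ by a short counting argument together with the elementary bound $\gamma_{cs}(G)\ge\gamma(G)\ge|V|/4$ valid for $3$-regular graphs. I would fix notation for one run of the algorithm: let $e_1=u_1v_1,\dots,e_k=u_kv_k$ be the edges selected by the loop, in order, and for $i\in[k]$ let $D_i$ be the set of vertices actually deleted at iteration $i$, so $\{u_i,v_i\}\subseteq D_i\subseteq N_G[u_i]\cup N_G[v_i]$. The sets $D_i$ are pairwise disjoint, $2\le|D_i|\le 6$ by $3$-regularity (a closed neighbourhood has four vertices and $\{u_i,v_i\}\subseteq N_G[u_i]\cap N_G[v_i]$), $W'=\bigcup_i\{u_i,v_i\}$ has size $2k$, $T=V\setminus\bigcup_i D_i$, and $S=\bigl(\bigcup_i D_i\bigr)\setminus W'$. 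Consequently $|V|=\sum_{i=1}^k|D_i|+|T|$ and $|S|=\sum_{i=1}^k(|D_i|-2)$.

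The crux is the claim that every vertex of $W=W'\cup T=V\setminus S$ has at least two neighbours in $S$. I would first record the simple fact that, when a vertex $x$ is chosen as an endpoint at some iteration, all of $N_G[x]$ is deleted at that same iteration; hence a vertex present at the start of iteration~$i$ cannot be adjacent to any endpoint chosen before iteration~$i$. Now take a chosen endpoint $u=u_i$: its co-endpoint $v_i$ lies in $W'$, and each of its two other neighbours $n$ is deleted no later than iteration~$i$ and is not an endpoint (an endpoint adjacent to $u_i$ would either be $v_i$ or would have caused $u_i$ to be deleted earlier), so $n\in S$; thus $u$ has at least two neighbours in $S$. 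Next take $w\in T$: since the loop has terminated, $G[T]$ is edgeless, so all three neighbours of $w$ were deleted during the loop, and by the same argument none of them is an endpoint, so all three lie in $S$. It follows that $S$ dominates $V$ and that no vertex outside $S$ is an $S$-external private neighbour of any vertex of $S$, i.e.\ $EPN(u,S)=\emptyset$ for every $u\in S$. Finally, every $u\in S$ lies in some $D_i$ with $u\notin\{u_i,v_i\}$, hence is adjacent to $u_i$ or $v_i$, a vertex of $W'\subseteq V\setminus S$; by Proposition~\ref{CSDprop} (using $EPN(u,S)=\emptyset$) that neighbour replaces $u$, so $S$ is a \textsc{CSDS} of $G$.

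For the approximation ratio, the identities above give
\[
3|S|=3\sum_{i=1}^k|D_i|-6k=2\sum_{i=1}^k|D_i|+\sum_{i=1}^k(|D_i|-6)\le 2\sum_{i=1}^k|D_i|\le 2\Bigl(\sum_{i=1}^k|D_i|+|T|\Bigr)=2|V|,
\]
using $|D_i|\le 6$ and $|T|\ge0$, so $|S|\le\tfrac{2}{3}|V|$. Since $G$ is $3$-regular, every dominating set of $G$ has size at least $|V|/4$, hence $\gamma_{cs}(G)\ge\gamma(G)\ge|V|/4$ and therefore $|S|\le\tfrac{2}{3}|V|=\tfrac{8}{3}\cdot\tfrac{|V|}{4}\le\tfrac{8}{3}\,\gamma_{cs}(G)$. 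Each iteration removes at least two vertices, so the loop runs at most $|V|/2$ times and the whole algorithm is polynomial; this completes the argument.

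I expect the real obstacle to be the structural claim that every vertex outside $S$ has two neighbours in $S$: the bookkeeping about the order of deletions and the degenerate cases (a neighbour already deleted at an earlier iteration, a neighbour coinciding with the co-endpoint, a would-be endpoint scheduled after $u_i$'s iteration) must be handled carefully, since domination of $G$, the vanishing of all the sets $EPN(u,S)$, and the applicability of Proposition~\ref{CSDprop} all rest on it. Once that is established, the counting in the third paragraph is routine, and the bound $\gamma_{cs}(G)\ge|V|/4$ is immediate.
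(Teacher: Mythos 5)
Your proposal is correct and takes essentially the same approach as the paper: analyse Algorithm~\ref{algo-4}, show the returned set $S$ is a \textsc{CSDS}, bound $|S|\le \tfrac{2}{3}|V|$ by counting at most six deletions against two chosen endpoints per iteration, and compare with $\gamma_{cs}(G)\ge |V|/4$. The only cosmetic difference is that you verify the co-secure property directly (every vertex outside $S$ has two neighbours in $S$, so all sets $EPN(u,S)$ are empty and any $W'$-neighbour can replace $u$), whereas the paper notes that $S$ is a minimal double dominating set and invokes Lemma~\ref{minimal_D_2}; your bookkeeping is, if anything, more careful than the paper's.
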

\begin{proof}
 Let $S_o$ be a minimum \textsc{CSDS} of a $3$-regular graph $G=(V,E).$ A vertex $x\in S_o$ can co-securely dominate at most $3$ vertices of $V\setminus S_o$. Therefore, $|V\setminus S_o| \leq 3|S_o|$. This implies that 
 \begin{equation} \label{eq1}
 |S_o| \geq \frac{n}{4}.    
 \end{equation}
 The set $S$ of vertices returned by Algorithm \ref{algo-4} is a minimal double dominating set in $G$ because each vertex in $W'$ has exactly two neighbors in $S$. By Lemma \ref{minimal_D_2}, $S$ is a \textsc{CSDS} of $G$.

 Thus, $W=W'\cup T.$ Let $|W' \cup S|=n_1=n-|T|.$ Now $|W'|\geq \dfrac{n_1}{3},$ since in the while loop, the algorithm has picked two vertices and simultaneously removed at most six vertices from the graph.  Now, 
  \begin{equation*}
     |W|=|W'|+|T|\geq \frac{n_1}{3}+n-n_1\geq n-\frac{2n}{3}=\frac{n}{3}.
  \end{equation*}
  Thus, 
  \begin{equation}\label{eq3}
     |S|=|V|-|W|\leq n-\frac{n}{3}=\frac{2n}{3} 
  \end{equation}
  This yields the upper bound on the size of the \textsc{CSDS} returned. Combining equation (\ref{eq1}) and equation (\ref{eq3}), we obtain $\dfrac{|S|}{|S_0|}\leq \dfrac{8}{3},$ thereby proving the lemma.
\end{proof}

Next, we design a constant factor approximation algorithm for \textsc{Min Co-secure Dom} when the input graph is $4$-regular. 

\begin{algorithm2e}[H]
\textbf{Input:} A $4$-regular graph $G=(V, E)$. \\
\textbf{Output:} A \textsc{CSDS} $S$ of $G=(V, E).$\\
\Begin{
$W'=\emptyset$;\\
\While{$\exists$ a maximal induced path $P(u_1, u_k)=(u_1, u_2, \ldots, u_k)$ or an induced cycle $C=(u_1, u_2, \ldots, u_k, u_1)$}{
    $W'=W' \cup \{u_1, u_2, \ldots, u_k\}$;\\
    Delete the vertex set $\{u_1, u_2, \ldots, u_k\}$ and their neighbors from $G$;}
    Let $T$ be the remaining vertices;\\
    $W=W'\cup T$;\\
    $S=V \backslash W$;\\ 
    \Return $S$;}
 \caption{Approx-CSD-4RG}
 \label{Algo-5}
\end{algorithm2e}

\begin{lemma}
 \textsc{Min Co-secure Dom} for $4$-regular graphs can be approximated within a factor of $\dfrac{10}{3}$.
\end{lemma}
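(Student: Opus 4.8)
The plan is to mimic the proof of Lemma~\ref{lem-3-reg}, with the selection of a maximal induced path (or an induced cycle), as in Algorithm~\ref{Algo-5}, playing the role that a single edge played there. Two estimates are needed. First, a lower bound on the optimum: if $S_o$ is a minimum \textsc{CSDS} of the $4$-regular graph $G=(V,E)$ with $|V|=n$, then $S_o$ is in particular a dominating set, so every vertex of $V\setminus S_o$ sends an edge into $S_o$ while each vertex of $S_o$ has only four incident edges; hence $|V\setminus S_o|\le 4|S_o|$, that is, $\gamma_{cs}(G)=|S_o|\ge n/5$. Second, an upper bound: the set $S$ returned by Algorithm~\ref{Algo-5} satisfies $|S|\le 2n/3$. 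Combining the two, the approximation ratio is at most $(2n/3)/(n/5)=10/3$, which proves the lemma.

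For correctness I would first observe that the while loop stops precisely when the surviving graph has no edges (a graph with an edge contains a maximal induced path on at least two vertices, e.g.\ a longest induced path), so the set $T$ of remaining vertices is independent. I would then check that $S$ is a double dominating set of $G$. If $v\in T$ then no neighbour of $v$ was ever selected (otherwise $v$ would have been deleted) and $v$ has no neighbour in $T$, so all four neighbours of $v$ lie in $S$. If $v\in W'$ lies on the path or cycle chosen at some iteration, split its four neighbours into those deleted at earlier iterations (each of which ended up in $S$), those lying on the chosen path or cycle (one or two of them, each in $W'$), and the remaining ones, which are deleted at the current iteration and hence lie in $S$; counting shows $v$ has at least two neighbours in $S$. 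Since $\delta(G)=4\ge 2$, it remains to see that $S$ is co-secure, and this is immediate: for $u\in S$ let $p\in W'$ be a vertex whose selection caused $u$ to be deleted; then $p\in N(u)\setminus S$, and $(S\setminus\{u\})\cup\{p\}$ is again a dominating set, because $u$ is dominated by $p$ and every other vertex outside $S$, having had at least two neighbours in $S$, still has one in $S\setminus\{u\}$. (Equivalently one may pass to a minimal double dominating subset of $S$ and invoke Lemma~\ref{minimal_D_2}.)

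The size bound $|S|\le 2n/3$ is the substantive part. The key claim is that when the path or cycle selected at an iteration has $k$ vertices, the number of vertices deleted at that iteration other than those $k$ is at most $2k$. Summing over iterations this yields $|S|\le 2|W'|$; since $W'$, $S$, $T$ partition $V$, $n=|W'|+|S|+|T|\ge |W'|+|S|\ge \tfrac12|S|+|S|=\tfrac32|S|$, hence $|S|\le 2n/3$. For an induced cycle the claim is immediate, since each of its vertices has at most two neighbours off the cycle. For a maximal induced path $P=(u_1,\dots,u_k)$, maximality forces every neighbour of the endpoint $u_1$ that is not on $P$ to be adjacent to some $u_i$ with $2\le i\le k$ (otherwise $P$ could be extended past $u_1$); hence the vertices deleted besides $u_1,\dots,u_k$, namely $\bigcup_{i=1}^k\bigl(N(u_i)\setminus V(P)\bigr)$, already equal $\bigcup_{i=2}^k\bigl(N(u_i)\setminus V(P)\bigr)$, whose size is at most the sum of the off-path degrees of $u_2,\dots,u_k$, i.e.\ at most $2(k-2)+3=2k-1\le 2k$ (the interior vertices have at most two off-path neighbours each and $u_k$ at most three). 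Here all neighbourhoods are read in the current surviving graph, which only lowers degrees.

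The main obstacle is this last estimate. A priori each of the two endpoints of the selected path could contribute three off-path neighbours, for a total of $2(k-2)+6=2k+2$ deleted vertices, which is too many and would give only a ratio of about $15/4$; it is precisely the maximality of the induced path that lets one absorb each endpoint's "extra" off-path neighbour into the neighbourhoods of the other path vertices, and one must also be slightly careful that the surviving graph is no longer regular so that all degrees are read in it. Once this is in place, everything else is bookkeeping, and the algorithm clearly runs in polynomial time, since a maximal induced path can be obtained by greedily extending an edge and the loop runs at most $|V|$ times.
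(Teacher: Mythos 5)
Your proposal is correct and follows essentially the same route as the paper: the same Algorithm \ref{Algo-5}, the lower bound $|S_o|\ge n/5$, and the upper bound $|S|\le 2n/3$ obtained by charging the at most $2k$ vertices deleted in an iteration (using maximality of the chosen induced path) to the $k$ selected vertices, yielding the ratio $10/3$. The only divergence is minor: the paper certifies that $S$ is co-secure by arguing it is a minimal double dominating set and invoking Lemma \ref{minimal_D_2}, whereas you verify co-security directly by swapping each deleted vertex with a selected neighbour that caused its deletion, and you make explicit the per-iteration counting that the paper compresses into ``following the proof of Lemma \ref{lem-3-reg}.''
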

\begin{proof}
Given a $4$-regular graph $G$, in polynomial time Algorithm \ref{Algo-5} computes a vertex set $W$ such that the degree of each vertex in $G[W]$ is at most $2$.

\begin{claim}
    $S$ is a \textsc{CSDS} of $G$.
\end{claim}
\begin{proof}
    By Lemma \ref{minimal_D_2}, it is enough to show that $S$ is a minimal double dominating set of $G$.
    
    $S$ is a double dominating set of $G$ as each vertex in $W$ has at least two neighbors in $S$. Suppose $S$ is not a minimal double dominating set of $G$. Then there must be a vertex $v \in S$ such that $S \setminus \{v\}$ is a double dominating set of $G$. This implies that $v$ must have at least two neighbors in $S$. 
If $v \in S$ is adjacent to a vertex of degree two in $G[W]$ then $S\setminus \{v\}$ is not a double dominating set of $G$ (because $G$ is $4$-regular). This implies that $v$ must be adjacent to at least one end-vertex of an induced path $P$ in $G[W']$. This contradicts the maximality of $P$.
\end{proof}

Following the proof of Lemma \ref{lem-3-reg}, it can be proved that $|S_o| \geq \dfrac{n}{5}$. 
Let $W'$ be the set of vertices of degree $2$ in $G[W]$ and $Q = W \setminus W'$. By setting $n_1 = n -|Q|$ and following the proof of Lemma \ref{lem-3-reg}, it can be proved that $|W'| \geq \dfrac{n_1}{3}$. This implies that $|W| \geq \dfrac{n}{3}$ and $|S| \leq \dfrac{2n}{3}$. Therefore, $\dfrac{|S|}{|S_o|} \leq \dfrac{10}{3}.$ 
\end{proof}

 Before we prove that \textsc{Min Co-secure Dom} is \textsc{APX}-complete for $3$-regular graphs, we need some terminology and results regarding the partial monopoly set. 

\begin{definition}[\cite{peleg1996local}]\textsc{(Min Partial Monopoly Problem)}
Given a graph $G=(V,E),$ partial monopoly problem is to find a set $M\subseteq V$ of minimum cardinality such that for each $v\in V\setminus M,$ $|M\cap N[v]|\geq \dfrac{1}{2}|N[v]|.$
\end{definition}
It is known that for $3$-regular graphs \textsc{Min Partial Monopoly Problem} is \textsc{APX}-complete \cite{mishra2002hardness}.
It is easy to observe the following lemma:
\begin{lemma}\label{lemma_a}
    Let $G$ be a $3$-regular graph. A partial monopoly set $M$ of $G$ is a double dominating set of $G$ and vice versa. 
\end{lemma}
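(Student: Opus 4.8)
The plan is to unwind both definitions on a $3$-regular graph and observe that they coincide verbatim, so that no real work is needed beyond a degree count. First I would record that if $G$ is $3$-regular then $|N[v]| = d(v)+1 = 4$ for every $v \in V$, hence $\tfrac{1}{2}|N[v]| = 2$. Thus the defining condition of a partial monopoly set $M$ reads: for every $v \in V \setminus M$, $|M \cap N[v]| \ge 2$.

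Next I would exploit that the condition is only imposed on vertices outside $M$. For such a $v$ we have $v \notin M$, and therefore $M \cap N[v] = M \cap (\{v\} \cup N(v)) = M \cap N(v)$. Consequently the partial monopoly condition is equivalent to: $|N(v) \cap M| \ge 2$ for every $v \in V \setminus M$. This is exactly the defining property of a double dominating set as stated earlier in the paper (namely $|N(v) \cap D| \ge 2$ for all $v \in V \setminus D$), with $M$ playing the role of $D$.

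Since the two conditions are literally the same inequality once $|N[v]| = 4$ is substituted and the vertex $v \notin M$ is dropped from $N[v]$, both implications are immediate: $M$ is a partial monopoly set of $G$ if and only if $M$ is a double dominating set of $G$.

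There is essentially no obstacle here; the only point requiring a moment's care is that the paper's notion of double domination imposes its cardinality requirement solely on $V \setminus D$ (the "open neighborhood" variant), which is precisely what makes the equivalence exact. Were one instead to use the closed-neighborhood variant of double domination, a small additional check on vertices of $M$ would be required, but under the definition adopted in the excerpt this is unnecessary, so the proof is a one-line degree count followed by a definitional rewrite.
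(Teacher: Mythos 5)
Your proof is correct and is exactly the intended argument: the paper states this lemma without proof as an easy observation, and your degree count ($|N[v]|=4$, so the monopoly threshold is $2$) together with the remark that $M\cap N[v]=M\cap N(v)$ for $v\notin M$ shows the two defining conditions coincide verbatim under the paper's (open-neighborhood, $V\setminus D$ only) definition of double domination. Your closing caveat about the closed-neighborhood variant is a sensible observation but not needed here.
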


\begin{lemma}\label{lemma_b}
    Let $G$ be a $3$-regular graph and $S\subseteq V$ be a minimal \textsc{CSDS} of $G.$ In polynomial time one can construct a double dominating set $S'\subseteq V$ with $|S'|\leq 2|S|.$
\end{lemma}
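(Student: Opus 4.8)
The statement asks: given a minimal CSDS $S$ of a $3$-regular graph $G$, construct in polynomial time a double dominating set $S'$ with $|S'| \le 2|S|$. This is the analogue, in the reverse direction, of the bound $\gamma_2(G) \le 2\gamma_{cs}(G)$ from Lemma~\ref{CSD_bound_with_D2}, but now I need it to hold for an arbitrary minimal CSDS (not just the optimum) and to be achieved constructively. So the natural approach is to re-run the argument in the proof of Lemma~\ref{CSD_bound_with_D2}, but track that every step is algorithmic and uses only $\delta(G) = 3$.

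\textbf{Key steps.} First I would partition $S$ into $S' = \{x \in S : EPN(x,S) \ne \emptyset\}$ and $S'' = S \setminus S'$; each vertex of $S''$ has all its $S$-neighbours, hence at least the rest of its neighbourhood, covered twice, so vertices outside $S \cup A$ (where $A = \bigcup_{x \in S'} EPN(x,S)$) already have two neighbours in $S''$. Next, using Proposition~\ref{CSDprop}, for each $x \in S'$ pick a replacement vertex $x^* \in V \setminus S$; since $G$ is $3$-regular, $EPN(x,S)$ has size at most $3$, and we may choose $x^*$ to be a vertex of $EPN(x,S)$ of maximum degree, so the remaining epns of $x$ lie in $N[x^*]$ and are thus double-dominated by $\{x, x^*\}$. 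Let $A'$ be the resulting set of chosen $x^*$'s, so $|A'| \le |S'|$. Finally, let $B'$ be a minimal subset of $(V \setminus S) \setminus A'$ dominating $S''$; since every vertex of $S''$ has empty external private neighbourhood with respect to $S''$, each such vertex can be covered using at most one new vertex, giving $|B'| \le |S''|$. Then $S' \cup A' \cup B'$ (abusing notation: $S \cup A' \cup B'$) is a double dominating set, of size at most $|S| + |S'| + |S''| = 2|S|$, and every step — computing private neighbourhoods, picking max-degree epns, greedily choosing $B'$ — runs in polynomial time.

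\textbf{Main obstacle.} The delicate point is verifying that $S \cup A' \cup B'$ really is a \emph{double} dominating set, i.e.\ that every vertex $v \notin S \cup A' \cup B'$ has at least two neighbours inside. There are three cases: if $v \notin A$, it already has two neighbours in $S'' \subseteq S$; if $v \in A \setminus A'$, say $v \in EPN(x,S)$, then $v$ is adjacent to $x \in S$ and, by the maximum-degree choice of $x^*$, also to $x^*$ or lies in $N[x^*]$ giving the second neighbour in $A'$ (one must handle carefully the subcase where $x^* = v$ cannot happen since $v \in A \setminus A'$, and where $x$ has a single epn, in which case $x \in S'$ contributes its unique epn to $A'$ and $v$ does not arise); and the vertices of $S''$ are handled by $B'$ together with their guaranteed neighbour in $V \setminus S$. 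Ensuring the $4$-regularity... rather $3$-regularity is used exactly where needed (namely $|N(x)| = 3$ forces the epn count and forces that a vertex with all $S$-neighbours present is double-dominated), and that the $B'$-bound $|B'| \le |S''|$ survives the minimality argument, is where I would spend the most care; the rest is bookkeeping identical to Lemma~\ref{CSD_bound_with_D2}.
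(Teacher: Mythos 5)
Your construction is essentially the paper's own proof: partition $S$ into $S_1=\{x\in S: EPN(x,S)\neq\emptyset\}$ and $S_2=S\setminus S_1$, use Proposition~\ref{CSDprop} to add one replacement vertex $x^*$ for each $x\in S_1$ (so that all of $EPN(x,S)$ gets a second dominator), and conclude with a set of size at most $|S|+|S_1|\leq 2|S|$. The only deviation is your extra set $B'$ dominating $S_2$, which is superfluous under the paper's definition of double domination (only vertices outside the set must have two neighbours inside, so the paper simply takes $S'=S\cup A'$), but it is harmless here since $|A'|+|B'|\leq |S_1|+|S_2|=|S|$ and the $2|S|$ bound is preserved.
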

\begin{proof}
    Let $S$ be a minimal \textsc{CSDS} of $G.$ Define $S_1$ be the set of vertices $v\in S$ such that $EPN(v,S)\neq \emptyset,$ and $S_2=S\setminus S_1.$ Now let $A=\bigcup\limits_{v\in S}^{} EPN(v,S).$ Note that every vertex in $A$ has exactly one neighbor in $S_1$ and every vertex in $(V\setminus S)\setminus A$ has at least two neighbors in $S_2$. By Proposition \ref{CSDprop}, for every vertex $x\in S$ there exists at least one vertex $x^*\in V\setminus S$ and $x^*\in EPN(x, S)$ such that $d_G(x^*) \geq |EPN(x, S)|$. Let us define a new set $A'\subseteq A,$ such that $A'$ contains that one vertex $x^*$ of each $EPN(x, S)$ for every $x\in S_1.$ Thus, $|A'|=|S_1|.$ Let $S'=S\cup A'.$ Now every vertex in $V\setminus S'$ has at least two neighbors in $S'.$ Hence $S'$ is a double dominating set of $G$ with cardinality $|S|+|A'|=|S|+|S_1|\leq 2|S|.$
\end{proof}

Now, we will prove that \textsc{Min Co-secure Dom} is \textsc{APX}-complete for $3$-regular graphs by establishing a reduction from \textsc{Min Partial Monopoly Problem} for $3$-regular graphs.

\begin{theorem}
    \textsc{Min Co-secure Dom} is \textsc{APX}-complete for $3$-regular graphs.
\end{theorem}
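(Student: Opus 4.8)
The plan is to establish \textsc{APX}-completeness in two directions. Membership in \textsc{APX} is already in hand: by Lemma~\ref{lem-3-reg}, Algorithm~\ref{algo-4} gives a $\frac{8}{3}$-approximation for $3$-regular graphs, so \textsc{Min Co-secure Dom} restricted to this class lies in \textsc{APX}. The substance of the theorem is \textsc{APX}-hardness, which I would prove by exhibiting an $L$-reduction (or equivalently an approximation-preserving reduction in the PTAS sense) from \textsc{Min Partial Monopoly Problem} on $3$-regular graphs, which is known to be \textsc{APX}-complete by \cite{mishra2002hardness}.

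The reduction essentially is the identity map on $3$-regular instances, exploiting the tight structural correspondence already set up. First, note that by Lemma~\ref{lemma_a}, for a $3$-regular graph $G$ the minimum partial monopoly number equals $\gamma_2(G)$, the minimum double domination number. So it suffices to $L$-reduce \textsc{Min Double Dom} on $3$-regular graphs to \textsc{Min Co-secure Dom} on $3$-regular graphs. For the forward direction of the reduction I would take $G' = G$ and observe that $\gamma_{cs}(G) \le \gamma_2(G)$: every minimal double dominating set is a \textsc{CSDS} by Lemma~\ref{minimal_D_2} (a $3$-regular graph has $\delta \ge 2$), so an optimal \textsc{CSDS} is no larger than an optimal double dominating set. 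This gives the first $L$-reduction inequality $\mathrm{opt}_{cs}(G) \le \beta \cdot \mathrm{opt}_{2}(G)$ with $\beta = 1$. For the solution-lifting direction, given a \textsc{CSDS} $S$ of $G$, first make it minimal in polynomial time (repeatedly discard redundant vertices), then apply Lemma~\ref{lemma_b} to produce in polynomial time a double dominating set $S'$ with $|S'| \le 2|S|$; concretely $|S'| = |S| + |S_1|$ where $S_1$ is the set of \textsc{CSDS}-vertices with nonempty external private neighborhood. The key quantitative point I would then need is the reverse bound $\gamma_2(G) \le c\,\gamma_{cs}(G)$ for some absolute constant $c$ on $3$-regular graphs — which follows from Lemma~\ref{CSD_bound_with_D2} with $c = 2$, or can be re-derived directly via the $S \mapsto S'$ construction applied to an optimal minimal \textsc{CSDS}. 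With both $|S'| - \mathrm{opt}_2(G) \le \gamma(|S| - \mathrm{opt}_{cs}(G))$ (for a suitable constant $\gamma$, coming from the additive slack $|S_1| \le |S| = \mathrm{opt}_{cs} + (|S| - \mathrm{opt}_{cs})$ combined with the reverse bound) and the forward inequality, the pair $(\mathrm{id}, S \mapsto S')$ is an $L$-reduction, and \textsc{APX}-hardness transfers.

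The step I expect to require the most care is verifying the $L$-reduction's second inequality rigorously — i.e. bounding $|S'| - \mathrm{opt}_2(G)$ linearly in $|S| - \mathrm{opt}_{cs}(G)$. The construction of Lemma~\ref{lemma_b} only guarantees $|S'| \le 2|S|$, which controls the ratio but not obviously the \emph{additive} error needed for an $L$-reduction; one has to combine it with $\mathrm{opt}_2(G) \le 2\,\mathrm{opt}_{cs}(G)$ (Lemma~\ref{CSD_bound_with_D2}) to get $|S'| - \mathrm{opt}_2(G) \le 2|S| - \mathrm{opt}_2(G) \le 2|S| - \mathrm{opt}_2(G)$, and then bound this by $2(|S| - \mathrm{opt}_{cs}(G)) + (2\,\mathrm{opt}_{cs}(G) - \mathrm{opt}_2(G))$; the last parenthesized term is nonnegative but need not be zero, so one must instead argue more carefully, using that when $|S| = \mathrm{opt}_{cs}(G)$ the produced $S'$ has size at most $2\,\mathrm{opt}_{cs}(G)$ while $\mathrm{opt}_2(G)$ could be strictly smaller. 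The clean fix is to bound $|S'| \le \mathrm{opt}_2(G) + 2(|S| - \mathrm{opt}_{cs}(G)) + O(1)$ is not quite automatic; the honest route is to note $|S'| = |S| + |S_1| \le 2|S| = 2\,\mathrm{opt}_{cs}(G) + 2(|S|-\mathrm{opt}_{cs}(G)) \le 2\,\mathrm{opt}_2(G) + 2(|S|-\mathrm{opt}_{cs}(G))$ is false in general; so one should instead combine $|S'| \le 2|S|$ with $\mathrm{opt}_{cs}(G) \le \mathrm{opt}_2(G)$ to write $|S'| - \mathrm{opt}_2(G) \le 2|S| - \mathrm{opt}_2(G) \le 2|S| - \mathrm{opt}_{cs}(G) = \mathrm{opt}_{cs}(G) + 2(|S| - \mathrm{opt}_{cs}(G))$ — still carrying an $\mathrm{opt}_{cs}(G)$ term. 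Resolving this cleanly — either by a sharper version of Lemma~\ref{lemma_b} that gives $|S'| \le \mathrm{opt}_2(G) + O(|S| - \mathrm{opt}_{cs}(G))$, or by invoking the standard fact that for \textsc{APX}-hardness via $L$-reductions it suffices to have a constant-factor relation between optima together with a solution-lifting map whose error is controlled by the gap — is the real technical heart, and I would devote the bulk of the proof to making exactly this estimate watertight.
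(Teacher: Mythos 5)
Your route is the same as the paper's: membership via the $\frac{8}{3}$-approximation of Lemma~\ref{lem-3-reg}, and hardness via the identity reduction from \textsc{Min Partial Monopoly Problem} on $3$-regular graphs, using Lemma~\ref{lemma_a} to identify partial monopoly sets with double dominating sets, Lemmas~\ref{minimal_D_2} and~\ref{CSD_bound_with_D2} for $\gamma_{cs}(G)\le\gamma_2(G)$, and Lemma~\ref{lemma_b} to lift a minimal \textsc{CSDS} $S$ to a double dominating set $S'$ with $|S'|\le 2|S|$. The genuine gap is that you never actually prove the second inequality you correctly identify as the crux, namely a bound of the form $|S'|-\gamma_2(G)\le\beta\,(|S|-\gamma_{cs}(G))$; your write-up ends with a sequence of abandoned attempts and a promise to ``make the estimate watertight,'' which is not a proof. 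Moreover, with the lifting $S\mapsto S\cup A'$ of Lemma~\ref{lemma_b} such an additive bound would require, for an optimal $S$, that $\gamma_{cs}(G)+|S_1|\le\gamma_2(G)$ (where $S_1$ is the set of vertices of $S$ with nonempty external private neighborhood), and nothing in the available lemmas delivers this; so the difficulty you flag cannot be dissolved by bookkeeping with $|S'|\le 2|S|$ and $\gamma_2(G)\le 2\gamma_{cs}(G)$ alone.

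For comparison, the paper does not attempt this additive estimate at all: its hardness argument consists of exactly the ingredients you assembled --- $|M_o|=\gamma_2(G)$, $|S_o|\le|M_o|$, and $|M|\le 2|S|$ --- from which it deduces the multiplicative relation $\frac{|M|}{|M_o|}\le 2\,\frac{|S|}{|S_o|}$ and immediately concludes \textsc{APX}-completeness. In other words, the paper settles for a ratio-preserving reduction that loses a factor of $2$, whereas you rightly observe that such a loss does not, under the standard $L$- or AP-reduction definitions, transfer \textsc{APX}-hardness: a PTAS for \textsc{Min Co-secure Dom} on $3$-regular graphs would only yield roughly a $2$-approximation for partial monopoly, which contradicts nothing. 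So your proposal is not a different argument but the paper's argument halted at the precise point where the paper asserts the conclusion without the estimate; to finish honestly one would need either a sharper lifting (for instance, exploiting that both $\gamma_{cs}$ and $\gamma_2$ are $\Theta(n)$ on $3$-regular graphs to extract an explicit inapproximability constant) or a different reduction altogether.
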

\begin{proof}
    Because of Lemma \ref{lem-3-reg}, it is enough to establish a polynomial time approximation ratio preserving reduction from \textsc{Min Partial Monopoly Problem} for $3$-regular graphs to \textsc{Min Co-secure Dom} for $3$-regular graphs.
    
    Given a $3$-regular graph $G=(V, E),$ an instance of \textsc{Min Partial Monopoly Problem}, we take the same graph $G$ as an instance of \textsc{Min Co-secure Dom}. Let $M_o$ be a minimum partial monopoly set of $G$ and $S_o$ be a minimum \textsc{CSDS} of $G.$ Then $|M_o|=\gamma_2(G)$ (by Lemma \ref{lemma_a}). Also, we have $|S_o|\leq |M_o|,$ by Lemma \ref{CSD_bound_with_D2}. Given a minimal \textsc{CSDS} $S$ of $G,$ we can construct a partial monopoly set $M\subseteq V$ with $|M|\leq 2|S|$ (from Lemma \ref{lemma_b} and \ref{lemma_a})   
    Therefore, $\dfrac{|M|}{|M_o|}\leq 2\dfrac{|S|}{|S_o|}.$ Hence, \textsc{Min Co-secure Dom} is \textsc{APX}-complete for $3$-regular graphs. 
\end{proof}

\section{Conclusion}
In this paper, we prove that \textsc{Min Co-secure Dom} is hard to approximate within a factor smaller than $\ln |V|$ for perfect elimination bipartite graphs and star convex bipartite graphs. On the positive side, we have proposed a $O(\ln |V|)$ approximation algorithm for \textsc{Min Co-secure Dom} for any graph. 
Apart from these, we have shown that for $3$-regular graphs and $4$-regular graphs \textsc{Min Co-secure Dom} admits a $\dfrac{8}{3}$ and $\dfrac{10}{3}$ factor approximation algorithms, respectively. It would be interesting to design a better approximation algorithm for $3$-regular graphs.  
We prove that it is \textsc{APX}-complete for $3$-regular graphs. We conjecture that it is \textsc{APX}-hard for $3$-regular bipartite graphs.

\end{document}